\documentclass[11pt,twoside, final]{elsarticle}
\usepackage{etoolbox,lastpage}
\date{\scriptsize   Received: , Accepted: .}
\usepackage{amsmath,amsthm,amscd,amsfonts,amssymb,enumerate}
\usepackage{graphicx}		
\usepackage{color}
\usepackage[colorlinks]{hyperref}
\usepackage{verbatim}
\usepackage{caption}
\usepackage{stmaryrd}
\usepackage[english]{babel}
\raggedbottom

\newtheorem{thm}{{\bf Theorem}}[section]
\newtheorem{cor}{{\bf Corollary }}[section]
\newtheorem{lem}{{\bf Lemma}}[section]
\newtheorem{prop}{{\bf Proposition}}[section]

\newtheorem{defn}{Definition}[section]
\newtheorem{rem}{{\bf Remark}}[section]
\numberwithin{equation}{section}

\newcommand{\R}{\mathbb{R}}
\newcommand{\N}{\mathbb{N}}

\newcommand{\C}{\mathbb{C}}

\newcommand{\ds}{\displaystyle}

\everymath{\displaystyle}
\newcommand{\ef}{\quad \hfill$\square$\vspace{0.15cm}\\}

\begin{document}
\title{Quantum algebra from generalized $q$-Hermite polynomials }

\author{Kamel Mezlini}
\address{Universit\'e de Tunis El-Manar, Facult\'e des
Sciences de Tunis,\\ D\'epartement de  Math\'ematiques, 2092 Tunis El-Manar, Tunisie.
\\ email: kamel.mezlini@lamsin.rnu.tn}

\author{Najib Ouled Azaiez}
\address{
Current address: Department of Mathematics, College of Science,\\
King Faisal University, P. O. Box 400\\
Al-Ahsa 31982, Saudi Arabia. \\
email:  nazaiez@kfu.edu.sa \\
Address: Universit\'e de Sfax, Facult\'e  des Sciences de Sfax,\\
D\'epartement de  Math\'ematiques,  B.P 1171, Sfax 3000, Tunisie.}

\date{\today}

\begin{abstract}
In this paper, we  discuss new results related to  the generalized discrete $q$-Hermite
II polynomials $ \tilde h_{n,\alpha}(x;q)$, introduced by
Mezlini et al. in 2014. Our aim is to give a continuous orthogonality relation,
a $q$-integral representation and an evaluation at unity of the Poisson kernel,
for these polynomials.
Furthermore, we introduce $q$-Schr\"{o}dinger operators and
give  the  raising and lowering operator algebra
corresponding to  these polynomials.
 Our results generate a new explicit realization  of the quantum algebra
$\mathsf{su}_{q}(1, 1)$, using the generators  associated with a $q$-deformed  generalized
para-Bose oscillator. \\
\textbf{keywords:} $q$-orthogonal polynomials, $q$-deformed algebras,
harmonic oscillators. \\
\textbf{MSC(2010):} 33D45; 81R30; 81R50.
\end{abstract}
\maketitle

\section{Introduction}

The $q$-deformed harmonic oscillator  algebras  have been intensively studied
in recent years due to their crucial role in diverse areas of mathematics and physics
(see \cite{ Drinfel,Floreanini,Kulish, Macfarlane1}).
One of the most important applications of   $q$-deformed algebras based theory  arises
from a generalization of the fundamental symmetry concept of the  classical  Lie algebras.

Many  algebraic constructions were proposed in the literature to describe assorted
generalizations of the quantum harmonic oscillator. However, a common difficulty
for most of them is to derive an explicit form of  associated Hamiltonian eigenfunctions.
It is well known that   Hermite  polynomials are connected to the realization of
classical-harmonic-oscillator algebras. It is worth to mention that  generalizations
of quantum harmonic oscillators  lead to generalizations of $q$-Hermite polynomials.
An explicit realization of  $q$-harmonic oscillator has been explored by many  authors
(see for instance  \cite{Atakishiyev, Atakishiyev2, Borzov, Kulish}),
where the eigenfunctions of the corresponding  Hamiltonian are given explicitly in terms
of $q$-deformed Hermite polynomials.
Generators of the corresponding oscillator algebra are realized in terms of
first-order difference operators.\\
This paper investigates the generalized discrete $q$-Hermite II polynomials
$ \tilde h_{n,\alpha}(x;q)$  to construct a new realization  of  quantum algebra.
From this generalization, we obtain an  explicit
form of the generators for  quantum  algebra, in terms of $q$-difference operators.\\

The structure of this paper is as follows:
Sec. 2 describes briefly  the main definitions and  properties of some $q$-basic special
functions and operators \cite{andrews,G}; Sec. 3 recalls some notations and useful
results about the generalized discrete $q$-Hermite II polynomials \cite{ghermite}.
Therefore, we obtain   continuous orthogonality relations.
Moreover, an integral representation of generalized discrete
$q$-Hermite II-polynomials is proposed, and an evaluation at unity of the  Poisson kernel
for a
family of polynomials $ \tilde h_{n,\alpha}(x;q)$, is also studied. In addition to this, for
$\alpha = 1/2$,  a formula using $q$-trigonometric functions
$Cos_q(x)$, $ Sin_q(x)$, and  an expression for the second $q$-Bessel functions in terms of
the generalized discrete $q$-Hermite II-polynomials, is given. Among other things,
by the specialization $x = i q^{\alpha + 1/2}$ in  the generating function of the even  generalized $q$-Hermite polynomials,
a special case of the  Heine  transformation of $\ _{2}\phi_1$ series formula, is recovered. 
Sec. 4 provides an explicit new realization  of  quantum algebra, in which the generators are
associated with $q$-deformed  generalized para-Bose oscillator.

\section{Notations and Preliminaries}
This section is systematically organized  in the following  order; Sec. 2.1 introduces
some basic notations; Sec. 2.2 recalls  the definitions
of $q$-derivatives and $q$-integrals; Sec. 2.3 recalls
 the definition of some $q$-special functions that are important in our paper.
\subsection{Basic symbols}
For the convenience of the reader, we provide in this section a summary of the
mathematical notations and definitions, see \cite{G, ghermite}.
Throughout this paper, we assume that $0<q<1$ and $\alpha > -1.$
For each complex number $a,$ we  define $q$-shifted factorials, being
$$(a;q)_{0}=1;\ \ \ (a;q)_{n}=\prod_{k=0}^{n-1}(1-aq^{k}),
n=1, 2,...;\ \ \ (a;q)_{\infty}=\prod_{k=0}^{\infty}(1-aq^{k}).$$
The $q$-number and the  $q$-factorial are defined as follows:
\begin{equation*}
\llbracket  x\rrbracket_{q}={{1-q^x}\over{1-q}}, ~~~~~ x\in \C\quad {\rm and}\quad n!_q =\llbracket
1\rrbracket_{q}\llbracket  2\rrbracket_{q}...\llbracket  n\rrbracket_{q},   ~~~~~~  0!_{q}=1,  ~~~~~~n\in \N.
\end{equation*}
For each  real $\alpha>-1,$ the  generalized $q$-integers and the generalized $q$-factorials
are defined as:
\begin{equation}\label{nfactalpha}
 \begin{array}{lll}
\left\llbracket2n \right\rrbracket_{q,\alpha}   &=& \left\llbracket 2n\right\rrbracket_{q}, \\
\left\llbracket 2n+1\right\rrbracket_{q,\alpha} &=& \left \llbracket2n+2\alpha+2\right\rrbracket_{q},\\
n!_{q,\alpha}&=&\left\llbracket1 \right\rrbracket_{q,\alpha}\left\llbracket2 \right\rrbracket_{q,\alpha}...\left\llbracket
n \right\rrbracket_{q,\alpha},
\;\;0!_{q,\alpha}=1,
\end{array}
\end{equation}
and  the generalized $q$-shifted factorials are defined as:
\begin{equation}\label{qqnalpha}
(q;q)_{n,\alpha}:=(1-q)^nn!_{q,\alpha}.
\end{equation}
We can  write (\ref{qqnalpha})  explicitly as:
\begin{equation}\label{ecr-fac}
\begin{array}{lll}
(q;q)_{2n,\alpha}&=&(q^2;q^2)_{n}(q^{2\alpha+2};q^2)_{n},\\
(q;q)_{2n+1,\alpha}&=&(q^2;q^2)_{n}(q^{2\alpha+2};q^2)_{n+1}.
\end{array}
\end{equation}
\begin{rem}\label{relaqqalpha}
The specific value $\alpha=-\frac{1}{2}$ leads to
$(q;q)_{n,\alpha}= (q;q)_{n}\;\;\mbox{and}\;\;n!_{q,\alpha }=n!_{q}$.
\end{rem}
\subsection{$q$-derivatives and  $q$-integral }

\noindent  Jackson's $q$-derivative  $D_{q}$ (see \cite{G,KC})
is defined by :
\begin{equation}\label{jacksonD}
 D_{q}f(z)=   \frac{f(z)-f(qz)}{(1-q)z}.
\end{equation}
\noindent The variant $D_q^+$, called  forward
$q$-derivative of the (backward) $q$-derivative $D_q^- = D_q$
(\ref{jacksonD}), is defined as:
\begin{equation}\label{jacksonD+}
  D_{q}^+f(z)=   \frac{f(q^{-1}z)-f(z)}{(1-q)z}.
\end{equation}
\noindent Note that $  \lim _{q\rightarrow 1^- }D_qf(z)=\lim
_{q\rightarrow 1^- }D_q^+f(z) =f^{\prime}(z)$
whenever $f$ is differentiable at $z$.\\
Generalized backward and forward $q$-derivative operators $D_{q,\alpha}$ and
$D_{q,\alpha}^+ $ are defined as (see \cite{ghermite})
\begin{equation}\label{D_qalpha}
 D_{q,\alpha}f(z)=   \frac{f(z)-q^{2\alpha+1}f(qz)}{(1-q)z},
\end{equation}
\begin{equation}\label{D_qalphap}
 D_{q,\alpha}^+f(z)=  \frac{f(q^{-1}z)-q^{2\alpha+1}f(z)}{(1-q)z}.
\end{equation}
Generalized $q$-derivatives operators are given by
\begin{equation}\label{defdeltaalpha}
\Delta _{q,\alpha}f =D_{q}f_e+D_{q,\alpha}f_o,
\end{equation}
\begin{equation}\label{defdeltaalphaplus}
\Delta _{q,\alpha}^+f =D_{q}^+f_e+D_{q,\alpha}^+f_o,
\end{equation}
where $f_e$ and $f_o$ are respectively the even and the odd parts of $f$.\\ For
$\alpha=-\frac{1}{2}$, we have  $\;D_{q,\alpha}= D_{q}$, $\;\;D_{q,\alpha}^+= D_{q}^+$,
$\;\;\Delta_{q,\alpha}= D_{q}\;$ and $\;\Delta_{q,\alpha}^+= D_{q}^+$.\\
We shall  need the  Jackson $q$-integral  defined by (see
\cite {G,KC}).
\begin{equation*}\label{eq:integ}
 \int_0^{\infty}{f(x)d_qx}
=(1-q)\sum_{n=-\infty}^{\infty}{f(q^n)q^n},
\end{equation*}
\begin{equation*}\label{eq:integral}
\int_{-\infty}^{\infty}{f(x)d_qx}
=(1-q)\sum_{n=-\infty}^{\infty}q^n f(q^n)+
(1-q)\sum_{n=-\infty}^{\infty}q^nf(-q^n).
\end{equation*}
One can easily show   that  these integrals converge for a bounded function $f,$
since the geometric series converges for $ 0 < q < 1$.

\subsection{Some $q$-analogues of special functions }
\noindent
The two Euler's   $q$-analogues of the  exponential
function are given by (see \cite{G})
\begin{equation}\label{q-Expo}
E_q(z)
=\displaystyle\sum_{k=0}^{\infty}\displaystyle\frac{q^{\frac{k(k-1)}{2}}z^{k}}{(q;q)_{k}}=(-z;q)_{\infty},
\end{equation}
\begin{equation}\label{q-expo}
e_q(z)
=\displaystyle\sum_{k=0}^{\infty}\displaystyle\frac{z^{k}}{(q;q)_{k}}=
\displaystyle\frac{1}{(z;q)_{\infty}},\;\;\;\;|z|<1.
\end{equation}
Then  $q$-analogues of the trigonometric functions are defined as
\begin{equation}\label{q-trigo}
Cos_q(z)=\frac{E_q(iz)+E_q(-iz)}{2},\;\;\;Sin_q(z)=\frac{E_q(iz)-E_q(-iz)}{2i}.
\end{equation}
 The generalized $q$-exponential function is defined as (see \cite{ghermite})
\begin{equation}\label{q-Expo-alpha}
E_{q,\alpha}(z):=\displaystyle\sum_{k=0}^{\infty}\displaystyle\frac{q^{\frac{k(k-1)}{2}}z^{k}}{(q;q)_{k,\alpha}}.
\end{equation}
Using Remark \ref{relaqqalpha}, the specific value $\alpha=-\frac{1}{2}$ leads to
$E_{q,\alpha}(z) = E_{q}(z)$.\\
The following  $q$-Bessel functions  can be expressed using  generalized
$q$-shifted factorials. The Jackson second $q$-Bessel function is defined as
(see \cite{G,Koekoek})
\begin{equation}\label{q-bessel2}
J_\alpha^{(2)}(x;q^2)=\frac{(q^{2\alpha+2};q^2)_\infty}{(q^{2};q^2)_\infty}\left(\frac{x}{2}\right)^\alpha
\ds\sum_{n=0}^{\infty}\frac{(-1)^nq^{2n(n+\alpha)}}{(q;q)_{2n,\alpha}}\left(\frac{x}{2}\right)^{2n}.
\end{equation}
The Hahn-Exton $q$-Bessel function is defined as (see \cite[page 20, Formula (0.7.15)]{Koekoek})
\begin{equation}\label{q-bessel3}
J_\alpha^{(3)}(x;q^2)=\frac{(q^{2\alpha+2};q^2)_\infty}{(q^{2};q^2)_\infty}
x^\alpha\ds\sum_{n=0}^{\infty}\frac{(-1)^nq^{n(n+1)}}{(q;q)_{2n,\alpha}}x^{2n}.
\end{equation}
The modified  $q$-Bessel function is defined as
\begin{equation}\label{jalpha}
j_\alpha(x;q^2)=\frac{(q^{2};q^2)_\infty}{(q^{2\alpha+2};q^2)_\infty} x^{-\alpha}J_\alpha^{(3)}(x;q^2)=
\ds\sum_{n=0}^{\infty}\frac{(-1)^nq^{n(n+1)}}{(q;q)_{2n,\alpha}}x^{2n}.
\end{equation}
\section{The generalized discrete $q$-Hermite II polynomials }
The generalized discrete $q$-Hermite II polynomials $\{\tilde h_{n,\alpha}(x;q)\}_{n=0}^\infty$ are
introduced by the first author et al. \cite{ghermite}.
We recall their definition and some of their main properties.
They are defined as
\begin{equation}\label{hnalphatild}
\tilde h_{n,\alpha}(x;q):=(q;q)_{n}\ds\sum_{k=0}^{[\frac{n}{2}]}\ds\frac{(-1)^kq^{-2nk}q^{k(2k+1)}x^{n-2k}}{(q^2;q^2)_{k}(q;q)_{n-2k,\alpha}},
\end{equation}
where $[x]$ denotes the integral part of $x\in\mathbb{ R}$.\\
\noindent For $\alpha=-\frac{1}{2}$,  $\tilde  h_{n,\alpha}(x;q)$ reduces to the discrete $q$-Hermite II
polynomial $\tilde h_{n}(x;q)$ (see \cite{Koekoek}).\\
 They  have the following  properties (see \cite{ghermite}): \\
{\bf The generating function:}
\begin{equation}\label{gghntild}
e_{q^2}(-z^2)E_{q,\alpha}( xz)=\ds\sum_{n=0}^\infty\frac{q^{\frac{n(n-1)}{2}}}{(q;q)_n}\tilde  h_{n,\alpha}(x;q)z^n.
\end{equation}
{\bf The inversion formula:}
\begin{equation}\label{monometild}
x^n=(q;q)_{n,\alpha}\ds\sum_{k=0}^{[\frac{n}{2}]}\frac{q^{-2nk+3k^2}\tilde  h_{n-2k,\alpha}(x;q)}{(q^2;q^2)_{k}(q;q)_{n-2k}  }.
\end{equation}
{\bf Forward shift operator:}
\begin{equation}\label{forwardshifttild}
\tilde h_{n,\alpha}(q^{-1}x;q)-q^{(2\alpha+1)\theta_{n+1}}\tilde h_{n,\alpha}(x;q)=q^{-n}(1-q^n)x\tilde h_{n-1,\alpha}(x;q),
\end{equation}
 where $\theta_{n}$ is defined to be $0$ if $n$ is odd and $1$ if $n$ is even.\\
{\bf Backward shift operator:}
\begin{equation}\label{backward}
\tilde h_{n,\alpha}(x;q) -q^{(2\alpha+1)\theta_{n+1}}(1+q^{-2\alpha-1}x^2)\tilde h_{n,\alpha}(qx;q)=
-q^{n}\frac{ 1-q^{-n-1-(2\alpha+1)\theta_{n}}}{1-q^{-n-1}} x\tilde h_{n+1,\alpha}(x;q).
\end{equation}
{\bf Three terms recursion formula:}
\begin{equation}\label{recursiontild}
x\tilde  h_{n,\alpha}(x;q)-q^{1-2n}(1- q^{n})\tilde  h_{n-1,\alpha}(x;q)=
\frac{1-q^{n+1+(2\alpha+1)\theta_{n}}}{1-q^{n+1} }\tilde  h_{n+1,\alpha}(x;q).
\end{equation}
{\bf $q$-Difference equations:}
\begin{equation}\label{h2n-q-diff}
(1+q^{-2\alpha-1}x^2)\tilde h_{2n,\alpha}(qx;q)  -(1+q^{-2\alpha}+q^{2n-2\alpha-1}x^2)\tilde h_{2n,\alpha}(x;q)
+ q^{-2\alpha}\tilde h_{2n,\alpha}(q^{-1}x;q)=0
\end{equation}
and
\begin{equation}\label{h2n1-q-diff}
\hspace{-0.5cm}(1+q^{-2\alpha-1}x^2)\tilde h_{2n+1,\alpha}(qx;q) -(q+q^{-2\alpha-1}+q^{2n-2\alpha}x^2)\tilde
h_{2n+1,\alpha}(x;q)+q^{-2\alpha}\tilde h_{2n+1,\alpha}(q^{-1}x;q)=0.
\end{equation}


The  family of generalized discrete $q$-Hermite II polynomials  satisfy two kind of orthogonality
relations, a discrete one and a continuous one.
As was shown in  \cite{ghermite}, we have:\\
{\bf{ A discontinuous orthogonality relation:}}\\

\begin{equation}\label{orthohnalpht}
\hspace{-0.5cm}\ds\int_{-\infty}^{\infty}\tilde h_{n,\alpha}(x;q) \tilde h_{m,\alpha}(x;q)
\omega_{\alpha}(x;q)|x|^{2\alpha+1}d_qx=
\frac{2(1-q)(-q, -q,q^2;q^2)_\infty q^{-n^2}\left(q;q\right)_{n}^2}
{(-q^{-2\alpha-1}, -q^{2\alpha+3},q^{2\alpha+2};q^2)_\infty\left(q;q\right)_{n,\alpha}}\delta_{n,m},
\end{equation}
where
\begin{equation}\label{omega}
\omega_\alpha(x;q)=e_{q^2}(-q^{-2\alpha-1}x^2)
\end{equation}
and $\delta_{n,m}$ is the Kronecker symbol.
\subsection{A continuous orthogonality relation}
Our primary interest in this paper is to prove a  continuous orthogonality relation for the
family of generalized discrete $q$-Hermite II polynomials.
First, we  rewrite the  $q$-Laguerre polynomials  (see \cite{Koekoek}) by means of the
generalized $q$-shifted factorials as follows:
\begin{equation*}
L_{n}^{(\alpha)}(x;q^2)=(q^{2\alpha+2};q^2)_n\ds\sum_{k=0}^{n}\ds\frac{(-1)^kq^{2k(k+\alpha)}x^{k}}{(q;q)_{2k,\alpha}(q^2;q^2)_{n-k}}.
\end{equation*}
The discrete $q$-Hermite II polynomials $\tilde h_{n,\alpha}(x;q)$ can also be expressed in terms
of $q$-Laguerre polynomials
$ L_{n}^{(\alpha)}(x;q)$ as follows (see \cite{ghermite}):
\hspace{-1cm}\begin{equation}\label{exp-hn-ln}
\left \{\begin{array}{lll}
\tilde h_{2n,\alpha}(x;q)  &=&   (-1)^nq^{-n(2n-1)}\frac{(q;q)_{2n}}{(q^{2\alpha+2};q^2)_{n}}L_{n}^{(\alpha)}(q^{-2\alpha-1}x^2;q^2) ,\\
\tilde h_{2n+1,\alpha}(x;q)  &=&  (-1)^nq^{-n(2n+1)}\frac{(q;q)_{2n+1}}{(q^{2\alpha+2};q^2)_{n+1}}xL_{n}^{(\alpha+1)}(q^{-2\alpha-1}x^2;q^2).\\
\end{array}\right.
\end{equation}
 The $q$-Laguerre polynomials satisfy the following orthogonality relations (see \cite{Koekoek}):
\begin{equation}\label{orth-lag}
\ds\int_{0}^{\infty} L_{n}^{(\alpha)}(x;q^2) L_{m}^{(\alpha)}(x;q^2) x^{\alpha}e_{q^2}(-x)dx=\Gamma(-\alpha)\Gamma(\alpha+1) \frac{(q^{-2\alpha};q^2)_\infty(q^{2\alpha+2};q^2)_n}{(q^2;q^2)_\infty(q^2;q^2)_n}q^{-2n}\delta_{n,m}.
\end{equation}
Note  that the orthogonality measure in (\ref{orth-lag}) is not unique.
\begin{thm}
The $q$-polynomials $\{\tilde h_{n,\alpha}(x;q) \}_{n=0}^\infty$ satisfy the following continuous
orthogonality relations:
\begin{equation}\label{cont-orth}
\begin{array}{lll}
\ds\int_{-\infty}^{\infty} \tilde h_{n,\alpha}(x;q) \tilde h_{m,\alpha}(x;q) |x|^{2\alpha+1}\omega_\alpha(x;q)dx&=&
d_{n,\alpha}^{-2}\delta_{n,m},\\
\end{array}
\end{equation}
where
\begin{equation}\label{omega}
\omega_\alpha(x;q)=e_{q^2}(-q^{-2\alpha-1}x^2)
\end{equation}
and
\begin{equation}\label{dncoef}
d_{n,\alpha}=C_{\alpha}q^{\frac{n^2}{2}}\frac{\left(q;q\right)_{n,\alpha}^{\frac{1}{2}}}{\left(q;q\right)_{n}},\;\;\;
C_{\alpha}=\ds\sqrt{\frac{q^{-(\alpha+1)(2\alpha+1)}(q^2;q^2)_\infty }{ \Gamma(-\alpha)\Gamma(\alpha+1)(q^{-2\alpha};q^2)_\infty}}.
\end{equation}
\end{thm}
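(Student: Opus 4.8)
The plan is to reduce the claimed continuous orthogonality to the known $q$-Laguerre orthogonality (\ref{orth-lag}) by means of the quadratic substitution $u=q^{-2\alpha-1}x^2$, treating the even- and odd-degree polynomials separately through the expansions (\ref{exp-hn-ln}).

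First I would dispose of the mixed-parity case. From the defining sum (\ref{hnalphatild}) the polynomial $\tilde h_{n,\alpha}(x;q)$ contains only the monomials $x^{n-2k}$, so it has the same parity as $n$; hence $\tilde h_{n,\alpha}(x;q)\tilde h_{m,\alpha}(x;q)$ is an odd function whenever $n\not\equiv m\pmod 2$. Since the weight $|x|^{2\alpha+1}\omega_\alpha(x;q)$ is even, the integrand in (\ref{cont-orth}) is then odd and its integral over $(-\infty,\infty)$ vanishes, giving $\delta_{n,m}=0$ for free. This reduces the problem to the two genuinely even integrands $\tilde h_{2n,\alpha}\tilde h_{2m,\alpha}$ and $\tilde h_{2n+1,\alpha}\tilde h_{2m+1,\alpha}$.

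For the even-degree case I would insert the first line of (\ref{exp-hn-ln}), use evenness to write $\int_{-\infty}^{\infty}=2\int_{0}^{\infty}$, and change variables by $u=q^{-2\alpha-1}x^2$, i.e. $x=q^{(2\alpha+1)/2}u^{1/2}$. A direct computation gives $|x|^{2\alpha+1}dx=\frac{1}{2}q^{(2\alpha+1)(\alpha+1)}u^{\alpha}\,du$ and $\omega_\alpha(x;q)=e_{q^2}(-u)$, so the integral collapses to $\int_{0}^{\infty}L_{n}^{(\alpha)}(u;q^2)L_{m}^{(\alpha)}(u;q^2)u^{\alpha}e_{q^2}(-u)\,du$, to which (\ref{orth-lag}) applies and yields $\delta_{n,m}$. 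The odd-degree case is handled identically using the second line of (\ref{exp-hn-ln}); here the extra factor $x^2$ coming from the two $xL_{n}^{(\alpha+1)}$ terms turns the weight into $u^{\alpha+1}$, so that (\ref{orth-lag}) is invoked with parameter $\alpha+1$.

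It remains to identify the resulting constant with $d_{n,\alpha}^{-2}$. In each case I would collect the prefactors from (\ref{exp-hn-ln}), the Jacobian power of $q$, and the right-hand side of (\ref{orth-lag}), then simplify the $q$-shifted factorials through (\ref{ecr-fac}), which rewrites $(q;q)_{2n,\alpha}$ and $(q;q)_{2n+1,\alpha}$ as products of $(q^2;q^2)$-factorials. The odd case additionally needs the Gamma shifts $\Gamma(\alpha+2)=(\alpha+1)\Gamma(\alpha+1)$ and $\Gamma(-\alpha-1)=-\Gamma(-\alpha)/(\alpha+1)$ together with the single-factor identities $(q^{-2\alpha-2};q^2)_\infty=(1-q^{-2\alpha-2})(q^{-2\alpha};q^2)_\infty$ and $(q^{2\alpha+2};q^2)_{n+1}=(1-q^{2\alpha+2})(q^{2\alpha+4};q^2)_n$, after which the parameter-shifted normalization folds back onto the single constant $C_\alpha$. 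The main obstacle is precisely this bookkeeping: one must verify that the even and odd reductions, which pass through $q$-Laguerre orthogonality with the two different parameters $\alpha$ and $\alpha+1$, nevertheless collapse to one and the same closed form $d_{n,\alpha}=C_\alpha q^{n^2/2}(q;q)_{n,\alpha}^{1/2}/(q;q)_n$; tracking the powers of $q$ and the Gamma-factor cancellations is where all the care is required.
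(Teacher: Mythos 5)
Your proposal is correct and follows essentially the same route as the paper's own proof: parity reduction to the even/even and odd/odd cases, insertion of the expansions (\ref{exp-hn-ln}), the substitution $u=q^{-2\alpha-1}x^2$ to invoke the $q$-Laguerre orthogonality (\ref{orth-lag}) with parameters $\alpha$ and $\alpha+1$, and the same Gamma-factor and $q$-shifted-factorial identities (via (\ref{ecr-fac})) to collapse both cases onto $d_{n,\alpha}^{-2}$. No gaps to report.
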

\begin{proof}
Since the weight function in the integral (\ref{cont-orth}) is an even function of the independent
variable $x$ and  the parity of the $q$-polynomials $\{\tilde h_{n,\alpha}(x;q) \}_{n=0}^\infty$ is the
parity of their degrees, it suffices to prove only those cases in (\ref{cont-orth}),
when degrees of polynomials $m$ and $n$ are either simultaneously even or odd. \\
First, we consider the even case: it follows from (\ref{exp-hn-ln}) that
\begin{equation}
\begin{array}{ll}\label{orth0}
\ds\int_{-\infty}^{\infty} \tilde h_{2n,\alpha}(x;q) \tilde h_{2m,\alpha}(x;q) |x|^{2\alpha+1}\omega_\alpha(x;q)dx&
\\=
A_{n,m}^\alpha \ds\int_{0}^{\infty}L_{n}^{(\alpha)}(q^{-2\alpha-1}x^2;q^2) L_{m}^{(\alpha)}(q^{-2\alpha-1}x^2;q^2) \omega_\alpha(x;q)x^{2\alpha+1}dx,&
\end{array}
\end{equation}
where $$A_{n,m}^\alpha =2(-1)^{n+m}q^{-n(2n-1)-m(2m-1)}\frac{(q;q)_{2n}(q;q)_{2m}}{(q^{2\alpha+2};q^2)_n(q^{2\alpha+2};q^2)_m}. $$
The change of variable $t = q^{-2\alpha-1}x^2$ in the last integral in (\ref{orth0}) leads to
$$
q^{(\alpha+1)(2\alpha+1)}\frac{A_{n,m}^\alpha}{2}  \ds\int_{0}^{\infty}L_{n}^{(\alpha)}(t;q^2) L_{m}^{(\alpha)}(t;q^2) e_{q^2}(-t)t^{\alpha}dt.
$$
By relation (\ref{orth-lag}), it follows that
\begin{equation*}
\begin{array}{ll}
\ds\int_{-\infty}^{\infty} \tilde h_{2n,\alpha}(x;q) \tilde h_{2m,\alpha}(x;q) |x|^{2\alpha+1}\omega_\alpha(x;q)dx&
\\=\Gamma(-\alpha)\Gamma(\alpha+1)
q^{(\alpha+1)(2\alpha+1)}\frac{A_{n,m}^\alpha}{2} \frac{(q^{-2\alpha};q^2)_\infty(q^{2\alpha+2};q^2)_n}{(q^2;q^2)_\infty(q^2;q^2)_{n}}q^{-2n}\delta_{n,m}\\
=\Gamma(-\alpha)\Gamma(\alpha+1) q^{(\alpha+1)(2\alpha+1)}\frac{(q^{-2\alpha};q^2)_\infty(q;q)_{2n}^2}{(q^2;q^2)_\infty(q^{2\alpha+2};q^2)_{n}(q^{2};q^2)_{n}}q^{-(2n)^2}\delta_{n,m},\\
&
\end{array}
\end{equation*}
then, using (\ref{ecr-fac}) we obtain the  result in the  case $n$ even. The odd case is obtained similarly. We have
\begin{equation}
\begin{array}{ll}\label{orth1}
\ds\int_{-\infty}^{\infty} \tilde h_{2n+1,\alpha}(x;q) \tilde h_{2m+1,\alpha}(x;q) |x|^{2\alpha+1}\omega_\alpha(x;q)dx&
\\=
B_{n,m}^\alpha \ds\int_{0}^{\infty}L_{n}^{(\alpha+1)}(q^{-2\alpha-1}x^2;q^2) L_{m}^{(\alpha+1)}(q^{-2\alpha-1}x^2;q^2)\omega_\alpha(x;q) x^{2\alpha+3}dx,&
\end{array}
\end{equation}
where
$$B_{n,m}^\alpha =2(-1)^{n+m}q^{-n(2n+1)-m(2m+1)}\frac{(q;q)_{2n+1}(q;q)_{2m+1}}{(q^{2\alpha+2};q^2)_{n+1}
(q^{2\alpha+2};q^2)_{m+1}}. $$
The change of variable $t = q^{-2\alpha-1}x^2$ in the last integral in (\ref{orth1}) leads to
$$
q^{(\alpha+2)(2\alpha+1)}\frac{B_{n,m}^{\alpha}}{2}  \ds\int_{0}^{\infty}L_{n}^{(\alpha+1)}(t;q^2)
L_{m}^{(\alpha+1)}(t;q^2) e_{q^2}(-t)t^{\alpha+1}dt.
$$
By relation (\ref{orth-lag}), it follows that
\begin{equation*}
\begin{array}{ll}
\ds\int_{-\infty}^{\infty} \tilde h_{2n+1,\alpha}(x;q) \tilde h_{2m+1,\alpha}(x;q) |x|^{2\alpha+1}\omega_\alpha(x;q)dx&
\\=\Gamma(-\alpha-1)\Gamma(\alpha+2)
q^{(\alpha+2)(2\alpha+1)}\frac{B_{n,m}^\alpha}{2} \frac{(q^{-2\alpha-2};q^2)_\infty(q^{2\alpha+4};q^2)_n}{(q^2;q^2)_\infty(q^2;q^2)_{n}}q^{-2n}\delta_{n,m}\\
=\Gamma(-\alpha-1)\Gamma(\alpha+2) q^{(\alpha+1)(2\alpha+1)}\frac{(q^{-2\alpha-2};q^2)_\infty(q^{2\alpha+4};q^2)_n(q;q)_{2n+1}^2}{(q^2;q^2)_\infty(q^{2\alpha+2};q^2)_{n+1}^2(q^{2};q^2)_{n}}q^{-4n(n+1)}\delta_{n,m}.\\
&
\end{array}
\end{equation*}
Using the fact that
$$\Gamma(-\alpha-1)\Gamma(\alpha+2)= -\Gamma(-\alpha)\Gamma(\alpha+1),$$
$$(q^{-2\alpha-2};q^2)_\infty=-q^{-2\alpha-2}(1- q^{2\alpha+2})(q^{-2\alpha};q^2)_\infty$$
 and (\ref{ecr-fac}), we get
\begin{equation*}
\begin{array}{ll}
\ds\int_{-\infty}^{\infty} \tilde h_{2n+1,\alpha}(x;q) \tilde h_{2m+1,\alpha}(x;q) |x|^{2\alpha+1}\omega_\alpha(x;q)dx&
\\
=\Gamma(-\alpha)\Gamma(\alpha+1) q^{(\alpha+1)(2\alpha+1)}\frac{(q^{-2\alpha};q^2)_\infty(q;q)_{2n+1}^2}{(q^2;q^2)_\infty(q;q)_{2n+1,\alpha}}q^{-(2n+1)^2}\delta_{n,m},
&
\end{array}
\end{equation*}
us desired.
\end{proof}

\subsection{The $q$-integral representation}
We start this section by  describing  the action of some of  $q$-derivatives operators on
powers of $x$ and on some $q$-analogues of Bessel functions.
\begin{prop}
The following statements hold:
\begin{equation}\label{derivdeltaxn}
\Delta _{q,\alpha}^kx^n=\frac{(q;q)_{n,\alpha}}{(1-q)^k(q;q)_{n-k,\alpha}}x^{n-k},\;\;n\ge k.
\end{equation}
\begin{equation}\label{dqjalpha}
D_qj_\alpha(\lambda x;q^2)=\frac{-q^2\lambda^2 x}{(1-q)(1-q^{2\alpha+2})}j_{\alpha+1}(q\lambda x;q^2).
\end{equation}
\begin{equation}\label{derivdeltaj2n}
\Delta _{q,\alpha}^{2n}j_\alpha(\lambda x;q^2)=\frac{(-1)^nq^{n(n+1)}\lambda^{2n}}{(1-q)^{2n}}j_\alpha(q^{n}\lambda x;q^2).
\end{equation}
\begin{equation}\label{derivdeltaj2n1}
\Delta _{q,\alpha}^{2n+1}j_\alpha(\lambda  x;q^2)=\frac{(-1)^{n+1}q^{(n+1)(n+2)}\lambda^{2n+2}}{(1-q)^{2n+1}(1-q^{2\alpha+2})}xj_{\alpha+1}(q^{n+1}\lambda x;q^2).
\end{equation}
\end{prop}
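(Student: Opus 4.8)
The plan is to treat the purely algebraic statement (\ref{derivdeltaxn}) first, since the three $q$-Bessel formulas will all follow from it by expanding the modified $q$-Bessel function (\ref{jalpha}) into its power series and differentiating term by term. For (\ref{derivdeltaxn}) I would first record the action of $\Delta_{\alpha,q}$ on a single monomial. Writing $x^n$ as an even function when $n$ is even and an odd one when $n$ is odd, the definition (\ref{defdeltaalpha}) gives $\Delta_{\alpha,q}x^n = D_q x^n$ in the even case and $\Delta_{\alpha,q}x^n = D_{q,\alpha}x^n$ in the odd case; a direct computation from (\ref{jacksonD}) and (\ref{D_qalpha}) yields $D_q x^n = \llbracket n\rrbracket_q x^{n-1}$ and $D_{q,\alpha}x^n = \frac{1-q^{n+2\alpha+1}}{1-q}\,x^{n-1}$. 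Comparing with the generalized $q$-integers (\ref{nfactalpha}) — where $\llbracket 2m\rrbracket_{q,\alpha} = \llbracket 2m\rrbracket_q$ and $\llbracket 2m+1\rrbracket_{q,\alpha} = \llbracket 2m+2\alpha+2\rrbracket_q$ — both cases collapse to the single identity $\Delta_{\alpha,q}x^n = \llbracket n\rrbracket_{q,\alpha}\,x^{n-1}$, valid for every $n$. Iterating $k$ times (the parity of the exponent alternates automatically, so the monomial identity keeps applying) gives $\Delta_{\alpha,q}^k x^n = \frac{n!_{q,\alpha}}{(n-k)!_{q,\alpha}}\,x^{n-k}$, and substituting $(q;q)_{n,\alpha} = (1-q)^n n!_{q,\alpha}$ from (\ref{qqnalpha}) produces exactly (\ref{derivdeltaxn}).

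Next I would insert the series (\ref{jalpha}), namely $j_\alpha(\lambda x;q^2) = \ds\sum_m \frac{(-1)^m q^{m(m+1)}}{(q;q)_{2m,\alpha}}\lambda^{2m}x^{2m}$, into (\ref{derivdeltaxn}). Since $\Delta_{\alpha,q}^k x^{2m}=0$ whenever $k>2m$ (each application lowers the degree by one), applying $\Delta_{\alpha,q}^{2n}$ leaves only the terms with $m\ge n$; after the shift $m\mapsto m+n$ and the splitting $(m+n)(m+n+1)=m(m+1)+2nm+n(n+1)$, the geometric factor $q^{2nm}=(q^{2n})^m$ is precisely what recombines $(q^n\lambda)^{2m}$, so the sum reassembles into $j_\alpha(q^n\lambda x;q^2)$ with the prefactor $\frac{(-1)^n q^{n(n+1)}\lambda^{2n}}{(1-q)^{2n}}$, which is (\ref{derivdeltaj2n}). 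The first-derivative formula (\ref{dqjalpha}) is the special instance $\Delta_{\alpha,q}=D_q$ acting on the even function $j_\alpha(\lambda x;q^2)$, and is checked by the same term-by-term computation (equivalently, it is the $n=0$ case of (\ref{derivdeltaj2n1})).

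For the odd powers (\ref{derivdeltaj2n1}) I would apply $\Delta_{\alpha,q}^{2n+1}$ to the same series; now only the terms with $2m\ge 2n+1$, i.e.\ $m\ge n+1$, survive. The essential new ingredient is the $\alpha\mapsto\alpha+1$ shift: the factorizations (\ref{ecr-fac}) give $(q;q)_{2m+1,\alpha}=(1-q^{2\alpha+2})(q;q)_{2m,\alpha+1}$, which converts the surviving denominators into those of $j_{\alpha+1}$ and explains the factor $1-q^{2\alpha+2}$ in the statement. After the shift $m\mapsto m+n+1$, the identity $(m+n+1)(m+n+2)=m(m+1)+2(n+1)m+(n+1)(n+2)$ again splits off a geometric factor $(q^{2(n+1)})^m$ that rebuilds the argument $q^{n+1}\lambda x$, the remaining constants assemble into the stated prefactor, and one overall factor of $x$ is left over from the odd exponent $x^{2m+1}$, giving (\ref{derivdeltaj2n1}). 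The main obstacle throughout is purely the bookkeeping of $q$-exponents: one must track them carefully through each reindexing so that the cross term $q^{2nm}$ is correctly absorbed into the rescaled argument $q^n\lambda$, and one must apply (\ref{ecr-fac}) in the right direction to realize the $\alpha\to\alpha+1$ shift in the odd case. No convergence issue arises, since everything reduces to formal manipulations of absolutely convergent $q$-series.
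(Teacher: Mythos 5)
Your proof is correct, and for \eqref{derivdeltaxn}, \eqref{dqjalpha} and \eqref{derivdeltaj2n} it is essentially the paper's own argument: the unified monomial identity $\Delta_{\alpha,q}x^n=\llbracket n\rrbracket_{q,\alpha}\,x^{n-1}$ iterated (the paper says ``by induction''), followed by term-by-term application to the series \eqref{jalpha}, using exactly the factorizations $(q;q)_{2m,\alpha}=(1-q^{2m})(q;q)_{2m-1,\alpha}$ and $(q;q)_{2m-1,\alpha}=(1-q^{2\alpha+2})(q;q)_{2m-2,\alpha+1}$ that the paper invokes. The one genuine divergence is \eqref{derivdeltaj2n1}: you apply $\Delta_{\alpha,q}^{2n+1}$ directly to the power series, keep the terms with $m\ge n+1$, reindex $m\mapsto m+n+1$, and use $(q;q)_{2m+1,\alpha}=(1-q^{2\alpha+2})(q;q)_{2m,\alpha+1}$ to rebuild $j_{\alpha+1}$; the paper instead observes that, by \eqref{defdeltaalpha} and the evenness of $\Delta_{\alpha,q}^{2n}j_\alpha(\lambda x;q^2)$, one has $\Delta_{\alpha,q}^{2n+1}j_\alpha=D_q\bigl[\Delta_{\alpha,q}^{2n}j_\alpha\bigr]$, and then simply chains \eqref{derivdeltaj2n} with \eqref{dqjalpha}, the exponent bookkeeping collapsing to $n(n+1)+2n+2=(n+1)(n+2)$. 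The paper's composition trick buys economy: it reuses the two identities already proven and avoids a second reindexing computation. Your direct computation costs more bookkeeping but is self-contained, makes the origin of the factor $1-q^{2\alpha+2}$ and of the shift $\alpha\to\alpha+1$ completely transparent, and has the side benefit that \eqref{dqjalpha} then falls out as the $n=0$ case of \eqref{derivdeltaj2n1} — legitimately, since your derivation of the latter never uses the former.
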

\begin{proof}
From  definitions (\ref{ecr-fac}), (\ref{jacksonD}) and by induction, we get (\ref{derivdeltaxn}).\\
Using the fact that
$$(q;q)_{2n,\alpha}=(1-q^{2n})(q;q)_{2n-1,\alpha}\;\;\mbox{and}\;\; (q;q)_{2n-1,\alpha}=(1-q^{2\alpha+2})(q;q)_{2n-2,\alpha+1},$$
we deduce (\ref{dqjalpha}).\\
By (\ref{jalpha}) and the result in (\ref{derivdeltaxn}) we obtain (\ref{derivdeltaj2n}).\\
By definition  (\ref{defdeltaalpha}), we have
$ \Delta _{q,\alpha}^{2n+1}j_\alpha(x;q^2)=D_q\left[\Delta _{q,\alpha}^{2n}j_\alpha\right](x;q^2)$.
Together with  (\ref{dqjalpha}) and (\ref{derivdeltaj2n}) we get (\ref{derivdeltaj2n1}).
\end{proof}
We have (see \cite[p.24, Lemma 5.1]{ghermite}).
\begin{equation}\label{expox2nalpha}
\ds\int_{0}^{\infty}e_{q^2}(-q y^2)y^{2n+2\alpha+1}d_qy  =c_{q,\alpha}q^{-n^2-2\alpha } \left(q^{2\alpha+2};q^2\right)_n,
\end{equation}
where
\begin{equation}\label{cqq}
c_{q,\alpha}=\ds\frac{(1-q)(-q^{2\alpha+3}, -q^{-2\alpha-1},q^2;q^2)_\infty}{(-q, -q,q^{2\alpha+2};q^2)_\infty}.
\end{equation}
An important formula used later is
\begin{lem}
\begin{equation}\label{q-intedjexp}
\ds\int_{0}^{\infty}e_{q^2}(-q y^2)j_\alpha(xy;q^2)y^{2\alpha+1}d_qy=c_{q,\alpha}e_{q^2}(-q^{-2\alpha-1} x^2).
\end{equation}
\end{lem}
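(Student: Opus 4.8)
The plan is to prove the identity
$$
\int_{0}^{\infty}e_{q^2}(-q y^2)\,j_\alpha(xy;q^2)\,y^{2\alpha+1}\,d_qy
= c_{q,\alpha}\,e_{q^2}(-q^{-2\alpha-1}x^2)
$$
by expanding the modified $q$-Bessel function $j_\alpha$ as a power series in its argument and integrating term by term against the weight, reducing everything to the already-established moment formula (\ref{expox2nalpha}). The right-hand side will then be recognized as a $q$-exponential series via (\ref{q-expo}).

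\textbf{Step 1: Expand $j_\alpha$ and interchange sum and $q$-integral.} Using the series definition (\ref{jalpha}), namely $j_\alpha(xy;q^2)=\sum_{n=0}^{\infty}\frac{(-1)^n q^{n(n+1)}}{(q;q)_{2n,\alpha}}(xy)^{2n}$, I would substitute into the left-hand side and exchange the order of summation and $q$-integration. For a Jackson $q$-integral this interchange is just a rearrangement of the absolutely convergent double series $\sum_n\sum_m$, so it requires only a convergence remark rather than a delicate limiting argument. This yields
$$
\sum_{n=0}^{\infty}\frac{(-1)^n q^{n(n+1)}}{(q;q)_{2n,\alpha}}\,x^{2n}
\int_{0}^{\infty}e_{q^2}(-q y^2)\,y^{2n+2\alpha+1}\,d_qy.
$$

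\textbf{Step 2: Apply the moment formula.} The inner $q$-integral is exactly formula (\ref{expox2nalpha}), which evaluates it to $c_{q,\alpha}\,q^{-n^2-2\alpha}\,(q^{2\alpha+2};q^2)_n$. Pulling out the constant $c_{q,\alpha}$, the left-hand side becomes
$$
c_{q,\alpha}\sum_{n=0}^{\infty}\frac{(-1)^n q^{n(n+1)}q^{-n^2-2\alpha}(q^{2\alpha+2};q^2)_n}{(q;q)_{2n,\alpha}}\,x^{2n}.
$$
Now I would use the factorization (\ref{ecr-fac}), $(q;q)_{2n,\alpha}=(q^2;q^2)_n(q^{2\alpha+2};q^2)_n$, so that the factor $(q^{2\alpha+2};q^2)_n$ cancels, leaving $c_{q,\alpha}\,q^{-2\alpha}\sum_{n}\frac{(-1)^n q^{n}}{(q^2;q^2)_n}x^{2n}$ after simplifying $q^{n(n+1)-n^2}=q^{n}$.

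\textbf{Step 3: Recognize the $q$-exponential.} The remaining sum is $\sum_{n}\frac{1}{(q^2;q^2)_n}\bigl(-q^{-2\alpha-1}x^2\bigr)^n\cdot q^{2\alpha+1}q^{n}$; more carefully, I must check that $(-1)^nq^{n}x^{2n}=\bigl(-q^{-2\alpha-1}x^2\bigr)^n\cdot q^{(2\alpha+2)n}$, and that the leftover powers of $q$ combine with the prefactor $q^{-2\alpha}$ to produce exactly $1$, so that the series is $\sum_n \frac{(-q^{-2\alpha-1}x^2)^n}{(q^2;q^2)_n}=e_{q^2}(-q^{-2\alpha-1}x^2)$ by definition (\ref{q-expo}) with base $q^2$. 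The main obstacle is therefore purely bookkeeping: tracking the powers of $q$ through Steps 2 and 3 to confirm the exponents collapse to give the clean base $q^2$ exponential with argument $-q^{-2\alpha-1}x^2$ and the correct overall constant $c_{q,\alpha}$. No analytic subtlety beyond the elementary convergence of the geometric-type series (valid for $0<q<1$) is expected, so this is the step I would carry out most carefully to ensure the stray factors cancel exactly.
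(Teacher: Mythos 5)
Your plan is exactly the paper's own (one-sentence) proof: expand $j_\alpha(xy;q^2)$ by its series (\ref{jalpha}), integrate term by term, and invoke the moment formula (\ref{expox2nalpha}). The problem is the check you defer to Step 3 --- that ``the leftover powers of $q$ combine with the prefactor $q^{-2\alpha}$ to produce exactly $1$'' --- because that check fails, so as written your argument does not close. Taking (\ref{expox2nalpha}) exactly as printed, your Step 2 is carried out correctly and yields
\begin{equation*}
c_{q,\alpha}\,q^{-2\alpha}\sum_{n=0}^{\infty}\frac{(-1)^nq^{n}x^{2n}}{(q^2;q^2)_n}
=c_{q,\alpha}\,q^{-2\alpha}\,e_{q^2}(-qx^2),
\end{equation*}
whereas the target is $c_{q,\alpha}\sum_{n=0}^{\infty}\frac{(-1)^nq^{-(2\alpha+1)n}x^{2n}}{(q^2;q^2)_n}$. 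The mismatch between the $n$-th terms is precisely the factor $q^{(2\alpha+2)n}$ you identified; since it depends on $n$, no constant prefactor can absorb it, and the two series are genuinely different functions (at $\alpha=-\frac{1}{2}$, for instance, you would be claiming $q\,e_{q^2}(-qx^2)=e_{q^2}(-x^2)$).

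The gap is therefore not in your strategy but in taking (\ref{expox2nalpha}) at face value: as printed it carries a misprint, and your bookkeeping, done carefully, should have detected it rather than asserted the cancellation. Matching coefficients of $x^{2n}$ on the two sides of (\ref{q-intedjexp}), using (\ref{ecr-fac}) and (\ref{q-expo}), forces
\begin{equation*}
\int_{0}^{\infty}e_{q^2}(-q y^2)\,y^{2n+2\alpha+1}\,d_qy
=c_{q,\alpha}\,q^{-n(n+2\alpha+2)}\left(q^{2\alpha+2};q^2\right)_n,
\end{equation*}
i.e.\ the exponent must be $-n(n+2\alpha+2)=-n^2-2\alpha n-2n$ rather than $-n^2-2\alpha$; already at $n=0$ the printed version gives the wrong value $c_{q,\alpha}q^{-2\alpha}$ for $\int_0^\infty e_{q^2}(-qy^2)y^{2\alpha+1}d_qy$, whose correct value $c_{q,\alpha}$ can be confirmed independently (e.g.\ via Ramanujan's ${}_1\psi_1$ summation). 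With the corrected exponent your computation collapses at once: $q^{n(n+1)}q^{-n(n+2\alpha+2)}=q^{-(2\alpha+1)n}$, so the $n$-th coefficient is $(-q^{-2\alpha-1}x^2)^n/(q^2;q^2)_n$ and the sum is exactly $c_{q,\alpha}\,e_{q^2}(-q^{-2\alpha-1}x^2)$, with no stray prefactor --- which is the computation the paper's proof sketch intends.
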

\begin{proof}
Expand $ j_\alpha(xy;q^2)$ as power series and integrate term by term and use (\ref{expox2nalpha}) to conclude
(\ref{q-intedjexp}).
\end{proof}
 Now we provide a $q$- integral representation of generalized discrete $q$-Hermite II polynomials.
\begin{thm}
For $n=0,1,2,...,$ we have
\begin{equation}
 \begin{array}{lll}
 \tilde h_{2n,\alpha}(x;q)&=& \frac{(-1)^nq^{-n^2+n(2\alpha+3)}(q;q)_{2n}}{c_{q,\alpha}(q;q)_{2n,\alpha}
e_{q^2}(-q^{-2\alpha-1} x^2)} \\
   && \times \ds\int_{0}^{\infty}e_{q^2}(-q y^2)j_\alpha(q^{n}xy;q^2)y^{2n+2\alpha+1}d_qy.
\end{array}
\end{equation}
\begin{equation}
 \begin{array}{lll}
\tilde h_{2n+1,\alpha}(x;q) &=& \frac{(-1)^{n+1}q^{-n^2+(n+1)(2\alpha+3)}(q;q)_{2n+1}x}{c_{q,\alpha}(1-q^{2\alpha+2})(q;q)_{2n+1,\alpha}
e_{q^2}(-q^{-2\alpha-1} x^2)} \\
 &&\times\ds\int_{0}^{\infty}e_{q^2}(-q y^2)j_{\alpha+1}(q^{n+1}xy;q^2)y^{2n+2\alpha+3}d_qy.
\end{array}
\end{equation}
\end{thm}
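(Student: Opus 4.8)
The plan is to deduce both formulas from the single integral identity \eqref{q-intedjexp} by applying the generalized $q$-derivative $\Delta_{\alpha,q}$ the appropriate number of times, and then invoking a Rodrigues-type description of $\Delta_{\alpha,q}^{m}\omega_\alpha$ for the weight $\omega_\alpha(x;q)=e_{q^2}(-q^{-2\alpha-1}x^2)$. I treat the even case (degree $2n$) in detail; the odd case has the same structure. Reading \eqref{derivdeltaj2n} with $\lambda=y$ and solving for the scaled Bessel function gives
\[
j_\alpha(q^{n}xy;q^2)=\frac{(1-q)^{2n}}{(-1)^{n}q^{n(n+1)}\,y^{2n}}\,\Delta_{\alpha,q}^{2n}j_\alpha(xy;q^2),
\]
the operator acting on $x$. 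Substituting this into the integral on the right of the theorem, the factor $y^{2n}$ reduces $y^{2n+2\alpha+1}$ to $y^{2\alpha+1}$; since $\Delta_{\alpha,q}$ is linear in $x$ while the integrand is even in $x$ for each fixed $y$, the operator passes outside the Jackson $q$-integral, and \eqref{q-intedjexp} collapses the result to
\[
\int_{0}^{\infty}e_{q^2}(-qy^2)\,j_\alpha(q^{n}xy;q^2)\,y^{2n+2\alpha+1}d_qy=\frac{(1-q)^{2n}c_{q,\alpha}}{(-1)^{n}q^{n(n+1)}}\,\Delta_{\alpha,q}^{2n}\omega_\alpha(x;q).
\]
Thus the whole problem reduces to evaluating $\Delta_{\alpha,q}^{2n}\omega_\alpha$.

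The central step is the Rodrigues-type identity $\Delta_{\alpha,q}^{m}\omega_\alpha=K_m\,\tilde h_{m,\alpha}(x;q)\,\omega_\alpha$, which I would extract from the backward shift \eqref{backward}. From $e_{q^2}(z)=1/(z;q^2)_\infty$ one reads off the functional equation $\omega_\alpha(qx;q)=(1+q^{-2\alpha-1}x^2)\omega_\alpha(x;q)$. Multiplying \eqref{backward} by $\omega_\alpha(x;q)$ and using this converts the term $(1+q^{-2\alpha-1}x^2)\tilde h_{n,\alpha}(qx;q)\omega_\alpha(x;q)$ into $\tilde h_{n,\alpha}(qx;q)\omega_\alpha(qx;q)$, so the left-hand side becomes $\tilde h_{n,\alpha}(x;q)\omega_\alpha(x;q)-q^{(2\alpha+1)\theta_{n+1}}\tilde h_{n,\alpha}(qx;q)\omega_\alpha(qx;q)$. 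As $\tilde h_{n,\alpha}\omega_\alpha$ has the parity of $n$, the value of $\theta_{n+1}$ is exactly what is needed for this combination to equal $(1-q)x\,\Delta_{\alpha,q}\big[\tilde h_{n,\alpha}\omega_\alpha\big]$ (it reproduces $D_q$ on the even part when $n$ is even and $D_{q,\alpha}$ on the odd part when $n$ is odd). Cancelling $(1-q)x$ yields the one-step recurrence
\[
\Delta_{\alpha,q}\big[\tilde h_{n,\alpha}(x;q)\omega_\alpha(x;q)\big]=\kappa_n\,\tilde h_{n+1,\alpha}(x;q)\omega_\alpha(x;q),\qquad \kappa_n=\frac{-q^{n}}{1-q}\cdot\frac{1-q^{-n-1-(2\alpha+1)\theta_{n}}}{1-q^{-n-1}},
\]
and iterating from $\tilde h_{0,\alpha}=1$ gives $\Delta_{\alpha,q}^{m}\omega_\alpha=K_m\,\tilde h_{m,\alpha}\omega_\alpha$ with $K_m=\prod_{j=0}^{m-1}\kappa_j$.

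Combining the two displayed identities gives
\[
\int_{0}^{\infty}e_{q^2}(-qy^2)\,j_\alpha(q^{n}xy;q^2)\,y^{2n+2\alpha+1}d_qy=\frac{(1-q)^{2n}c_{q,\alpha}K_{2n}}{(-1)^{n}q^{n(n+1)}}\,\tilde h_{2n,\alpha}(x;q)\,\omega_\alpha(x;q),
\]
which is the first asserted formula once one checks $\dfrac{(-1)^{n}q^{n(n+1)}}{(1-q)^{2n}K_{2n}}=\dfrac{(-1)^{n}q^{-n^2+n(2\alpha+3)}(q;q)_{2n}}{(q;q)_{2n,\alpha}}$. For the odd case I would apply $\Delta_{\alpha,q}^{2n+1}$ and use \eqref{derivdeltaj2n1}; this produces the factor $x\,j_{\alpha+1}(q^{n+1}xy;q^2)$ together with $K_{2n+1}$, and dividing out the overall $x$ gives the second formula.

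The one genuinely laborious point is evaluating the telescoping product $K_m=\prod_{j=0}^{m-1}\kappa_j$ in closed form and matching it to the stated prefactors. Because $\kappa_j$ carries the parity factor $\theta_j$, I would split the product over even and odd $j$, collapse the emerging $q^2$-shifted factorials via \eqref{ecr-fac}, and carefully track the powers of $q$; I expect this to give $K_{2n}=\dfrac{q^{2n^2-2n(\alpha+1)}(q;q)_{2n,\alpha}}{(1-q)^{2n}(q;q)_{2n}}$ together with the analogous $K_{2n+1}$, which are precisely the constants required. The only analytic subtlety, the interchange of $\Delta_{\alpha,q}^{m}$ with the Jackson $q$-integral, is immediate since the latter is an absolutely convergent sum in the independent variable $y$ while $\Delta_{\alpha,q}$ acts on the disjoint variable $x$.
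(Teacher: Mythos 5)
Your proof follows the same skeleton as the paper's: the integral identity \eqref{q-intedjexp}, the operator formulas \eqref{derivdeltaj2n} and \eqref{derivdeltaj2n1}, and a Rodrigues-type reduction $\Delta_{\alpha,q}^{m}\omega_\alpha=K_m\,\tilde h_{m,\alpha}(x;q)\,\omega_\alpha$, followed by bookkeeping of constants. The one real difference is that the paper simply quotes the Rodrigues-type formula \eqref{Rodrigues-type formula} from \cite{ghermite}, whereas you re-derive it from the backward shift relation \eqref{backward}: the functional equation $\omega_\alpha(qx;q)=(1+q^{-2\alpha-1}x^2)\,\omega_\alpha(x;q)$ and the parity bookkeeping with $\theta_{n+1}$ work exactly as you say, and iterating from $\tilde h_{0,\alpha}=1$ is legitimate. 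This makes the argument self-contained, and your anticipated value of $K_{2n}$ is indeed what the product $\prod_{j=0}^{2n-1}\kappa_j$ gives (it also agrees with the constant in \eqref{Rodrigues-type formula} after converting the base-$q^{-1}$ factorials to base $q$). In particular, the even case of the theorem is correctly proved by your argument.

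The odd case, however, contains a genuine sign gap that your sketch glosses over. Every factor $\kappa_j$ is negative: for odd $j$, $\kappa_j=-q^{j}/(1-q)$, and for even $j$ the ratio $(1-q^{-j-2\alpha-2})/(1-q^{-j-1})$ is a quotient of two negative quantities (for $\alpha>-1$), hence positive. Therefore $K_{2n+1}=\prod_{j=0}^{2n}\kappa_j<0$; explicitly
\[
K_{2n+1}=-\,\frac{q^{2n^{2}-2n\alpha-2\alpha-1}\,(q;q)_{2n+1,\alpha}}{(1-q)^{2n+1}\,(q;q)_{2n+1}}.
\]
Feeding this into your scheme yields the second displayed formula with prefactor $(-1)^{n}$, not $(-1)^{n+1}$: your method, carried out honestly, proves a statement differing from the printed theorem by a sign, so the claim that the constants come out ``precisely'' as required is false in the odd case. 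The defect lies in the statement (and in the paper's own proof, which silently drops the factor $(q-1)^{2n+1}/(1-q)^{2n+1}=-1$ at its final step), not in your method: for $n=0$ one has $\tilde h_{1,\alpha}(x;q)=(1-q)x/(1-q^{2\alpha+2})$, while \eqref{q-intedjexp} with $\alpha$ replaced by $\alpha+1$ evaluates the integral in question as $c_{q,\alpha+1}\,\omega_\alpha(x;q)$ with $c_{q,\alpha+1}=q^{-2\alpha-3}(1-q^{2\alpha+2})\,c_{q,\alpha}>0$, which is consistent with the $(-1)^{n}$ version and contradicts the printed $(-1)^{n+1}$. So you should either prove the sign-corrected statement and flag the erratum, or reconcile your $K_{2n+1}$ with the printed constant --- which cannot be done.
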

\begin{proof}
We recall the Rodrigues-type formula (see \cite{ghermite})
\begin{equation}\label{Rodrigues-type formula}
e_{q^2}(-q^{-2\alpha-1}x^2)\tilde h_{n,\alpha}(x;q)=\frac{(q-1)^nq^{\frac{-n(n-1)}{2}}(q^{-1};q^{-1})_{n}}{(q^{-1};q^{-1})_{n,\alpha}}\Delta_{q,\alpha}^n
e_{q^2}(-q^{-2\alpha-1}x^2).
\end{equation}
From (\ref{q-intedjexp}), we obtain
\begin{equation*}
\ds\int_{0}^{\infty}e_{q^2}(-q y^2)\Delta_{q,\alpha}^nj_\alpha(xy;q^2)y^{2\alpha+1}d_qy=c_{q,\alpha}\Delta_{q,\alpha}^ne_{q^2}(-q^{-2\alpha-1} x^2).
\end{equation*}
From (\ref{derivdeltaj2n}) and (\ref{Rodrigues-type formula}), we get
\begin{equation*}
\begin{array}{ll}
\frac{(-1)^nq^{n(n+1)}}{(1-q)^{2n}}\ds\int_{0}^{\infty}e_{q^2}(-q y^2)j_\alpha(q^{n}xy;q^2)y^{2n+2\alpha+1}d_qy &\\
=c_{q,\alpha}\frac{q^{n(2n-1)}(q^{-1};q^{-1})_{2n,\alpha}}{(q-1)^{2n}(q^{-1};q^{-1})_{2n}}   e_{q^2}(-q^{-2\alpha-1} x^2)\tilde h_{2n,\alpha}(x;q).
\end{array}
\end{equation*}
Using the fact that
$\frac{(q^{-1};q^{-1})_{2n,\alpha}}{(q^{-1};q^{-1})_{2n}}=q^{-n(2\alpha+1)}\frac{(q;q)_{2n,\alpha}}{(q;q)_{2n}} $, we obtain
\begin{equation*}
\begin{array}{ll}
(-1)^n\ds\int_{0}^{\infty}e_{q^2}(-q y^2)j_\alpha(q^{n}xy;q^2)y^{2n+2\alpha+1}d_qy &\\
=c_{q,\alpha}\frac{q^{n(n-2\alpha-3)}(q;q)_{2n,\alpha}}{(q;q)_{2n}}   e_{q^2}(-q^{-2\alpha-1} x^2)\tilde h_{2n,\alpha}(x;q).
\end{array}
\end{equation*}
Hence,
\begin{eqnarray*}
 \tilde  h_{2n,\alpha}(x;q)  &=& \frac{(-1)^nq^{-n(n-2\alpha-3)}(q;q)_{2n}}{c_{q,\alpha}(q;q)_{2n,\alpha}
e_{q^2}(-q^{-2\alpha-1} x^2)} \\
   && \times \ds\int_{0}^{\infty}e_{q^2}(-q y^2)j_\alpha(q^{n}xy;q^2)y^{2n+2\alpha+1}d_qy.
\end{eqnarray*}
Further,
\begin{equation*}
\ds\int_{0}^{\infty}e_{q^2}(-q y^2)\Delta_{q,\alpha}^{2n+1}j_\alpha(xy;q^2)y^{2\alpha+1}d_qy=c_{q,\alpha}\Delta_{q,\alpha}^{2n+1}e_{q^2}(-q^{-2\alpha-1} x^2).
\end{equation*}
From (\ref{derivdeltaj2n1}) and (\ref{Rodrigues-type formula}) we get
\begin{equation*}
\begin{array}{ll}
\frac{(-1)^{n+1}q^{(n+1)(n+2)}x}{(1-q)^{2n+1}(1-q^{2\alpha+2})}\ds\int_{0}^{\infty}e_{q^2}(-q y^2)j_{\alpha+1}(q^{n+1}xy;q^2)y^{2n+2\alpha+3}d_qy &\\
=c_{q,\alpha}\frac{q^{n(2n+1)}(q^{-1};q^{-1})_{2n+1,\alpha}}{(q-1)^{2n+1}(q^{-1};q^{-1})_{2n+1}}   e_{q^2}(-q^{-2\alpha-1} x^2)\tilde h_{2n+1,\alpha}(x;q).
\end{array}
\end{equation*}
Using the fact that
$\frac{(q^{-1};q^{-1})_{2n+1,\alpha}}{(q^{-1};q^{-1})_{2n+1}}=q^{-(n+1)(2\alpha+1)}\frac{(q;q)_{2n+1,\alpha}}{(q;q)_{2n+1}} $, we obtain
\begin{equation*}
\begin{array}{ll}
\frac{(-1)^{n+1}q^{(n+1)(n+2)}x}{(1-q)^{2n+1}(1-q^{2\alpha+2})}\ds\int_{0}^{\infty}e_{q^2}(-q y^2)j_{\alpha+1}(q^{n+1}xy;q^2)
y^{2n+2\alpha+3}d_qy &\\
=c_{q,\alpha}\frac{q^{n(2n+1)-(n+1)(2\alpha+1)}(q;q)_{2n+1,\alpha}}{(q-1)^{2n+1}(q;q)_{2n+1}}   e_{q^2}(-q^{-2\alpha-1} x^2)\tilde
h_{2n+1,\alpha}(x;q).
\end{array}
\end{equation*}
Thus,
\begin{eqnarray*}
 \tilde h_{2n+1,\alpha}(x;q)&=& \frac{(-1)^{n+1}q^{-n^2+(n+1)(2\alpha+3)}(q;q)_{2n+1}x}{c_{q,\alpha}(1-q^{2\alpha+2})(q;q)_{2n+1,\alpha}
e_{q^2}(-q^{-2\alpha-1} x^2)} \\
   && \times \ds\int_{0}^{\infty}e_{q^2}(-q y^2)j_{\alpha+1}(q^{n+1}xy;q^2)y^{2n+2\alpha+3}d_qy.
\end{eqnarray*}
This completes the proof.
\end{proof}
\subsection{Evaluation at unity of the Poisson kernel for $ \tilde h_{n,\alpha}(x;q)$}
\begin{thm}
The following equation is an evaluation at unity of the Poisson kernel
for the generalized discrete $q$-Hermite II polynomials:
\begin{equation}\label{summhnhn}
\begin{array}{ll}
\ds\sum_{n=0}^{\infty}\frac{q^{n^2}(q;q)_{n,\alpha}}{(q;q)_{n}^2}\tilde h_{n,\alpha}(q^{\alpha+\frac{1}{2}}x;q)
\tilde h_{n,\alpha}(q^{\alpha+\frac{1}{2}}y;q) &
\\
=\frac{(q^2;q^2)_\infty(xy)^{-\alpha}}{(q^{2\alpha+2};q^2)_\infty(x-y)}
\left[J_{\alpha+1}^{(2)}(2x;q^2)J_{\alpha}^{(2)}(2y;q^2)-J_{\alpha}^{(2)}(2x;q^2)J_{\alpha+1}^{(2)}(2y;q^2)\right].
&
\end{array}
\end{equation}
\end{thm}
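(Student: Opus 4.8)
The plan is to recognize the left-hand side as the $t=1$ (confluent) limit of the Christoffel--Darboux kernel attached to the \emph{continuous} orthogonality established above, and to evaluate the resulting boundary term through the large-degree limit of the $q$-Laguerre polynomials. First I would record the two data the Christoffel--Darboux formula needs. From the explicit expansion (\ref{hnalphatild}) the leading coefficient of $\tilde h_{n,\alpha}(\cdot;q)$ is $k_n=(q;q)_n/(q;q)_{n,\alpha}$ (the term $k=0$), while the squared norm for the positive weight $|x|^{2\alpha+1}\omega_\alpha(x;q)$ is $h_n=d_{n,\alpha}^{-2}$ by the continuous orthogonality theorem. Since $\{\tilde h_{n,\alpha}\}$ is orthogonal with respect to a positive measure it obeys a three-term recurrence, hence the Christoffel--Darboux identity
\[
\sum_{n=0}^{N}\frac{\tilde h_{n,\alpha}(u;q)\tilde h_{n,\alpha}(v;q)}{h_n}
=\frac{k_N}{k_{N+1}h_N}\,
\frac{\tilde h_{N+1,\alpha}(u;q)\tilde h_{N,\alpha}(v;q)-\tilde h_{N,\alpha}(u;q)\tilde h_{N+1,\alpha}(v;q)}{u-v}.
\]
Writing $u=q^{\alpha+\frac12}x$, $v=q^{\alpha+\frac12}y$ and inserting the values of $k_n$ and $h_n$, the prefactor simplifies to $k_N/(k_{N+1}h_N)=C_\alpha^2\,q^{N^2}(q;q)_{N+1,\alpha}/\big((q;q)_N(q;q)_{N+1}\big)$, and the Christoffel--Darboux left-hand side equals $C_\alpha^2$ times the partial sums of the series in (\ref{summhnhn}); the constant $C_\alpha^2$ will cancel at the end.

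Next I would let $N\to\infty$. Because $\tilde h_{n,\alpha}(u;q)$ grows only polynomially in $n$ for fixed $u$ while the coefficient carries the Gaussian factor $q^{n^2}$, the series converges absolutely, so it suffices to identify the limit of the boundary term. Here I would use the $q$-Laguerre representation (\ref{exp-hn-ln}): after the substitution $u=q^{\alpha+\frac12}x$, which turns the argument $q^{-2\alpha-1}u^2$ into $x^2$, all the $q$-powers and $q$-shifted factorials of the prefactor cancel against those produced by (\ref{exp-hn-ln}), leaving, for $N=2n$,
\[
\frac{k_N}{k_{N+1}h_N}\,\tilde h_{N+1,\alpha}(u;q)\tilde h_{N,\alpha}(v;q)
=C_\alpha^2\,q^{\alpha+\frac12}\,x\,\frac{(q^2;q^2)_n}{(q^{2\alpha+2};q^2)_n}\,L_n^{(\alpha+1)}(x^2;q^2)L_n^{(\alpha)}(y^2;q^2),
\]
together with the mirror expression for the swapped term $\tilde h_{N,\alpha}(u;q)\tilde h_{N+1,\alpha}(v;q)$.

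The remaining input is the termwise limit $\lim_{n\to\infty}L_n^{(\beta)}(x^2;q^2)=x^{-\beta}J_\beta^{(2)}(2x;q^2)$, which is immediate on comparing the series for $L_n^{(\beta)}$ with the series (\ref{q-bessel2}) for $J_\beta^{(2)}$: the factor $(q^{2\beta+2};q^2)_n/(q^2;q^2)_{n-k}$ tends to $(q^{2\beta+2};q^2)_\infty/(q^2;q^2)_\infty$ for each $k$. Passing to the limit, $(q^2;q^2)_n/(q^{2\alpha+2};q^2)_n\to(q^2;q^2)_\infty/(q^{2\alpha+2};q^2)_\infty$, the factor $q^{\alpha+\frac12}$ cancels the identical factor hidden in $u-v=q^{\alpha+\frac12}(x-y)$, and the antisymmetric combination $x\big(x^{-(\alpha+1)}J_{\alpha+1}^{(2)}(2x;q^2)\big)\big(y^{-\alpha}J_{\alpha}^{(2)}(2y;q^2)\big)-(x\leftrightarrow y)$ collects its powers of $x,y$ into the global factor $(xy)^{-\alpha}$, reproducing exactly the right-hand side of (\ref{summhnhn}) after the $C_\alpha^2$ cancels.

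The main obstacle is the passage to the limit in the boundary term. One must check that the even ($N=2n$) and odd ($N=2n+1$) subsequences yield the \emph{same} limit; this is where both lines of (\ref{exp-hn-ln}) and the two indices $\alpha,\alpha+1$ are needed, and a short computation shows the odd case produces the identical antisymmetric $q$-Bessel expression. One must also verify that the convergence $L_n^{(\beta)}\to x^{-\beta}J_\beta^{(2)}(2x;q^2)$ is locally uniform (the $q^{2k^2}$ weight in the series makes dominated convergence available) so that the limit may be taken inside the Christoffel--Darboux quotient. The only other delicate point is the bookkeeping of the powers $q^{N^2}$ against the $q^{-n(2n\pm1)}$ coming from (\ref{exp-hn-ln}), which is mechanical.
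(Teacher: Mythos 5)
Your proposal is correct, but it follows a genuinely different route from the paper's proof. The paper splits the sum in (\ref{summhnhn}) by parity and, via (\ref{exp-hn-ln}), identifies the even and the odd halves with two \emph{already summed} $q$-Laguerre Christoffel--Darboux kernels (of orders $\alpha$ and $\alpha+1$, each carrying a denominator $x^2-y^2$), quoted from the literature on the Laguerre-to-Bessel limit transition; it then merges the two halves into the single $x-y$ denominator by means of the contiguous relation $q^{2\alpha+2}xJ_{\alpha+2}^{(2)}(2x;q^2)=(1-q^{2\alpha+2})J_{\alpha+1}^{(2)}(2x;q^2)-xJ_{\alpha}^{(2)}(2x;q^2)$. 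You instead apply Christoffel--Darboux once, directly to the family $\tilde h_{n,\alpha}$, with the norms $h_n=d_{n,\alpha}^{-2}$ supplied by the continuous orthogonality theorem proved just before; the partial sums are then \emph{identically} equal to the boundary term, and only the two top-degree polynomials need to be converted to $q$-Laguerre form before invoking the termwise limit $L_n^{(\beta)}(x^2;q^2)\to x^{-\beta}J_\beta^{(2)}(2x;q^2)$. I verified your prefactor $C_\alpha^2\,q^{N^2}(q;q)_{N+1,\alpha}/\bigl((q;q)_N(q;q)_{N+1}\bigr)$, the even boundary term, and also the odd subsequence $N=2n+1$, which indeed produces the same antisymmetric Bessel limit, as you anticipated. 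What your route buys: no $q$-Bessel contiguous relation is needed, only the termwise Laguerre-to-Bessel limit is used (not the full summed Laguerre kernel), the $x-y$ denominator appears directly rather than by recombination of two $x^2-y^2$ kernels, and the argument exploits the paper's own new orthogonality result, making the section self-contained. What the paper's route buys: by citing the known summed Laguerre kernels it sidesteps the interchange-of-limits issue that you must handle (locally uniform convergence of $L_n^{(\beta)}$), and the parity splitting is arguably closer to the structure of (\ref{exp-hn-ln}).

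Two caveats on your write-up, neither fatal. First, your convergence justification is wrong as stated: $\tilde h_{n,\alpha}(u;q)$ does \emph{not} grow polynomially in $n$ for fixed $u$; by (\ref{exp-hn-ln}) it grows roughly like $q^{-n^2/2}$, and it is only the combination $q^{n^2}\tilde h_{n,\alpha}(u;q)\tilde h_{n,\alpha}(v;q)(q;q)_{n,\alpha}/(q;q)_n^2$ that is geometrically small, of order $q^{n}$. The error is harmless in your scheme, because the Christoffel--Darboux identity makes each partial sum literally equal to the boundary term, so convergence of the series is a consequence of your boundary-term analysis rather than a prerequisite for it; but the sentence should be corrected or deleted. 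Second, to invoke Christoffel--Darboux you should state explicitly the hypothesis it rests on: the $\tilde h_{n,\alpha}$ have exact degree $n$ and are orthogonal with positive norms with respect to the positive measure $|x|^{2\alpha+1}\omega_\alpha(x;q)\,dx$, hence satisfy a three-term recurrence; that is the step that licenses the identity you write down.
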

\begin{proof}
From  the Christoffel$-$Darboux formula and the limit transition of
$q$-Laguerre polynomials to Jackson's $q$-Bessel functions (see \cite{Moak} and \cite{Nicola}), we
deduce that
\begin{equation*}
\begin{array}{ll}
\ds\sum_{n=0}^{\infty}\frac{q^{2n}(q^2;q^2)_{n}}{(q^{2\alpha+2};q^2)_{n}}L_n^{(\alpha)}(x^2;q^2)L_n^{(\alpha)}(y^2;q^2)  &
\\
=\frac{(q^2;q^2)_\infty(xy)^{-\alpha}}{(q^{2\alpha+2};q^2)_\infty(x^2-y^2)}
\left[xJ_{\alpha+1}^{(2)}(2x;q^2)J_{\alpha}^{(2)}(2y;q^2)-yJ_{\alpha}^{(2)}(2x;q^2)J_{\alpha+1}^{(2)}(2y;q^2)\right].
&
\end{array}
\end{equation*}
In the last sum, the $L_n^{(\alpha)}(x^2;q^2)$ can be written in terms of
$\tilde h_{2n,\alpha}(q^{\alpha+\frac{1}{2}}x;q).$  Using (\ref{exp-hn-ln}),   it follows that
\begin{equation}\label{sumh2n}
\begin{array}{ll}
\ds\sum_{n=0}^{\infty}\frac{q^{4n^2}(q;q)_{2n,\alpha}}{(q;q)_{2n}^2}\tilde h_{2n,\alpha}(q^{\alpha+\frac{1}{2}}x;q)
\tilde h_{2n,\alpha}(q^{\alpha+\frac{1}{2}}y;q) &
\\
=\ds\sum_{n=0}^{\infty}\frac{q^{2n}(q^2;q^2)_{n}}{(q^{2\alpha+2};q^2)_{n}}L_n^{(\alpha)}(x^2;q^2)L_n^{(\alpha)}(y^2;q^2)  &
\\
=\frac{(q^2;q^2)_\infty(xy)^{-\alpha}}{(q^{2\alpha+2};q^2)_\infty(x^2-y^2)}
\left[xJ_{\alpha+1}^{(2)}(2x;q^2)J_{\alpha}^{(2)}(2y;q^2)-yJ_{\alpha}^{(2)}(2x;q^2)J_{\alpha+1}^{(2)}(2y;q^2)\right].
&
\end{array}
\end{equation}
Likewise (again using (\ref{exp-hn-ln})), we have
$$\tilde h_{2n+1,\alpha}(q^{\alpha+\frac{1}{2}}x;q)=(-1)^nq^{-n(2n+1)}
\frac{(q;q)_{2n+1}}{(q^{2\alpha+2};q^2)_{n+1}} q^{\alpha+\frac{1}{2}}xL_n^{(\alpha+1)}(x^2;q^2),$$

then we get
\begin{equation*}
\begin{array}{ll}
\ds\sum_{n=0}^{\infty}\frac{q^{(2n+1)^2}(q;q)_{2n+1,\alpha}}{(q;q)_{2n+1}^2}\tilde h_{2n+1,\alpha}(q^{\alpha+\frac{1}{2}}x;q)
\tilde h_{2n+1,\alpha}(q^{\alpha+\frac{1}{2}}y;q) &
\\
=\ds\sum_{n=0}^{\infty}\frac{q^{2n+2\alpha+2}(q^2;q^2)_{n}}{(q^{2\alpha+2};q^2)_{n+1}}xyL_n^{(\alpha+1)}(x^2;q^2)
L_n^{(\alpha+1)}(y^2;q^2)  &
\\
=\frac{q^{2\alpha+2}xy}{(1-q^{2\alpha+2})}\ds\sum_{n=0}^{\infty}\frac{q^{2n}(q^2;q^2)_{n}}{(q^{2\alpha+2};q^2)_{n}}
L_n^{(\alpha+1)}(x^2;q^2)L_n^{(\alpha+1)}(y^2;q^2)  &
\\

=\frac{q^{2\alpha+2}(q^2;q^2)_\infty(xy)^{-\alpha}}{(q^{2\alpha+2};q^2)_\infty(x^2-y^2)}
\left[xJ_{\alpha+2}^{(2)}(2x;q^2)J_{\alpha+1}^{(2)}(2y;q^2)-yJ_{\alpha+1}^{(2)}(2x;q^2)J_{\alpha+2}^{(2)}(2y;q^2)\right].
&
\end{array}
\end{equation*}
Using the fact (see \cite[p.25]{G})
$$q^{2\alpha+2}xJ_{\alpha+2}^{(2)}(2x;q^2) = (1-q^{2\alpha+2})J_{\alpha+1}^{(2)}(2x;q^2)-xJ_{\alpha}^{(2)}(2x;q^2),$$
we obtain
\begin{equation}\label{sumh2n+1}
\begin{array}{ll}
\ds\sum_{n=0}^{\infty}\frac{q^{(2n+1)^2}(q;q)_{2n+1,\alpha}}{(q;q)_{2n+1}^2}\tilde h_{2n+1,\alpha}(q^{\alpha+\frac{1}{2}}x;q)
\tilde h_{2n+1,\alpha}(q^{\alpha+\frac{1}{2}}y;q) &
\\
=\frac{q^{2\alpha+2}(q^2;q^2)_\infty(xy)^{-\alpha}}{(q^{2\alpha+2};q^2)_\infty(x^2-y^2)}
\left[xJ_{\alpha+2}^{(2)}(2x;q^2)J_{\alpha+1}^{(2)}(2y;q^2)-yJ_{\alpha+1}^{(2)}(2x;q^2)J_{\alpha+2}^{(2)}(2y;q^2)\right]
&
\\
=\frac{(q^2;q^2)_\infty(xy)^{-\alpha}}{(q^{2\alpha+2};q^2)_\infty(x^2-y^2)}
\left[\left((1-q^{2\alpha+2})J_{\alpha+1}^{(2)}(2x;q^2)-xJ_{\alpha}^{(2)}(2x;q^2)\right)J_{\alpha+1}^{(2)}(2y;q^2)\right.
&
\\
 \left. -J_{\alpha+1}^{(2)}(2x;q^2)\left((1-q^{2\alpha+2})J_{\alpha+1}^{(2)}(2y;q^2)-yJ_{\alpha}^{(2)}(2y;q^2)\right)\right]
&
\\
=\frac{(q^2;q^2)_\infty(xy)^{-\alpha}}{(q^{2\alpha+2};q^2)_\infty(x^2-y^2)}
\left[ y J_{\alpha+1}^{(2)}(2x;q^2)J_{\alpha}^{(2)}(2y;q^2)-x J_{\alpha}^{(2)}(2x;q^2)J_{\alpha+1}^{(2)}(2y;q^2) \right].
&
\end{array}
\end{equation}
Add the result in (\ref{sumh2n}) to (\ref{sumh2n+1}) to obtain the desired summation.
\end{proof}
\begin{cor}
The following  equation is  an evaluation at unity of the  Poisson kernel for
the discrete $q$-Hermite II polynomials :
\begin{equation*}
\begin{array}{ll}
\ds\sum_{n=0}^{\infty}\frac{q^{n^2}}{(q;q)_{n}}\tilde h_{n}(x;q) \tilde h_{n}(y;q) &
\\
=\frac{(q;q^2)_\infty}{(q^{2};q^2)_\infty(x-y)}
\left[Sin_q(x)Cos_q(y)-Cos_q(x)Sin_q(y)\right]
&
\end{array}
\end{equation*}
\end{cor}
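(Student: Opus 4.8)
The plan is to obtain the corollary as the specialization $\alpha=-\frac{1}{2}$ of the Poisson kernel established in Theorem 3.3, equation (\ref{summhnhn}). By Remark \ref{relaqqalpha}, at $\alpha=-\frac{1}{2}$ we have $(q;q)_{n,\alpha}=(q;q)_n$ and $\tilde h_{n,\alpha}(x;q)=\tilde h_n(x;q)$, so the left-hand side of (\ref{summhnhn}) collapses: the factor $q^{\alpha+\frac12}=q^0=1$ disappears from the arguments, and the coefficient $\frac{q^{n^2}(q;q)_{n,\alpha}}{(q;q)_n^2}$ becomes $\frac{q^{n^2}}{(q;q)_n}$. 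This reproduces exactly the sum appearing in the corollary, so the only real work is to evaluate the right-hand side at $\alpha=-\frac12$.

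The first step is therefore to identify the $q$-Bessel functions $J_{\alpha}^{(2)}$ and $J_{\alpha+1}^{(2)}$ at $\alpha=-\frac12$ with the $q$-trigonometric functions $\mathrm{Sin}_q$ and $\mathrm{Cos}_q$. Concretely I would compute $J_{-1/2}^{(2)}(2x;q^2)$ and $J_{1/2}^{(2)}(2x;q^2)$ from the defining series (\ref{q-bessel2}): setting $\alpha=-\frac12$ gives $(q^{2\alpha+2};q^2)_\infty=(q;q^2)_\infty$ and the prefactor $(x/2)^{\alpha}=(x)^{-1/2}2^{1/2}$, and the sum $\sum_n \frac{(-1)^n q^{2n(n-1/2)}}{(q;q)_{2n,-1/2}}(x/2)^{2n}$ should, after using $(q;q)_{n,-1/2}=(q;q)_n$, match the series defining $\mathrm{Cos}_q$ (and similarly the $\alpha=\frac12$ case matches $\mathrm{Sin}_q$) up to explicit powers of $x$ and factors of $(q;q^2)_\infty/(q^2;q^2)_\infty$. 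The expected identities are of the form
\begin{equation*}
J_{-1/2}^{(2)}(2x;q^2)=\frac{(q;q^2)_\infty}{(q^2;q^2)_\infty}\,x^{-1/2}2^{1/2}\,\mathrm{Cos}_q(x),\qquad
J_{1/2}^{(2)}(2x;q^2)=\frac{(q;q^2)_\infty}{(q^2;q^2)_\infty}\,x^{1/2}2^{-1/2}\,\mathrm{Sin}_q(x),
\end{equation*}
up to bookkeeping of powers of $q$ and $2$ that I would pin down by comparing the two series termwise.

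The second step is to substitute these identifications into the right-hand side of (\ref{summhnhn}) at $\alpha=-\frac12$. The prefactor $\frac{(q^2;q^2)_\infty (xy)^{-\alpha}}{(q^{2\alpha+2};q^2)_\infty (x-y)}$ becomes $\frac{(q^2;q^2)_\infty (xy)^{1/2}}{(q;q^2)_\infty (x-y)}$, and the bracket $J_{\alpha+1}^{(2)}(2x)J_\alpha^{(2)}(2y)-J_\alpha^{(2)}(2x)J_{\alpha+1}^{(2)}(2y)$ becomes, after inserting the trigonometric expressions, a constant multiple of $(xy)^{-1/2}\bigl[\mathrm{Sin}_q(x)\mathrm{Cos}_q(y)-\mathrm{Cos}_q(x)\mathrm{Sin}_q(y)\bigr]$ (the $x^{1/2}y^{-1/2}$ and $x^{-1/2}y^{1/2}$ weights combining with the $J$-arguments to give the antisymmetric combination displayed). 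The $(xy)^{\pm1/2}$ factors should cancel against the prefactor, and the infinite-product constants should collapse to $\frac{(q;q^2)_\infty}{(q^2;q^2)_\infty}$, yielding precisely the claimed right-hand side.

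The main obstacle I anticipate is purely the constant- and power-tracking in the second step: getting all powers of $q$, powers of $2$, and the ratios of infinite products $(q;q^2)_\infty$, $(q^2;q^2)_\infty$ to cancel correctly so that the stray $(xy)^{\pm 1/2}$ weights disappear and the final constant is exactly $\frac{(q;q^2)_\infty}{(q^2;q^2)_\infty}$. There is no conceptual difficulty — the identity is a clean specialization — but the two $q$-Bessel-to-$q$-trigonometric reductions must be computed carefully and consistently, since a misplaced factor of $x^{1/2}$ would spoil the cancellation against the $(x-y)^{-1}$ Christoffel--Darboux denominator.
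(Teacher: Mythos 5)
Your plan is exactly the paper's proof: the paper specializes Theorem 3.3 (equation (\ref{summhnhn})) to $\alpha=-\tfrac{1}{2}$ and quotes the two identities
$$J_{-1/2}^{(2)}(2x;q^2)=\frac{(q;q^2)_\infty}{(q^2;q^2)_\infty\sqrt{x}}\,\mathrm{Cos}_q(x),\qquad J_{1/2}^{(2)}(2x;q^2)=\frac{(q;q^2)_\infty}{(q^2;q^2)_\infty\sqrt{x}}\,\mathrm{Sin}_q(x),$$
after which the cancellation you describe finishes the argument. However, your provisional versions of these identities contain a slip that is not mere bookkeeping. First, since the argument is $2x$, the prefactor in (\ref{q-bessel2}) is $(2x/2)^{\alpha}=x^{\alpha}$, so no powers of $2$ appear at all. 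Second, and more importantly, both identities carry the \emph{same} weight $x^{-1/2}$: your formula $J_{1/2}^{(2)}(2x;q^2)\propto x^{+1/2}\,\mathrm{Sin}_q(x)$ is wrong, because the series multiplying the prefactor $x^{1/2}$ equals $(1-q)\,x^{-1}\,\mathrm{Sin}_q(x)$ (the odd power of $x$ is already contained in $\mathrm{Sin}_q$, and $(q;q^2)_{n+1}=(1-q)(q^3;q^2)_n$ converts $(q^3;q^2)_\infty$ into $(q;q^2)_\infty$). This matters: with the weights as you stated them, the bracket in (\ref{summhnhn}) would evaluate to a constant times $(xy)^{-1/2}\left[x\,\mathrm{Sin}_q(x)\mathrm{Cos}_q(y)-y\,\mathrm{Cos}_q(x)\mathrm{Sin}_q(y)\right]$, which does \emph{not} reduce to the antisymmetric combination in the corollary, contradicting your own parenthetical claim that the weights combine correctly. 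Once both weights are corrected to $x^{-1/2}$, the $(xy)^{1/2}$ prefactor cancels, the infinite products collapse to $\frac{(q;q^2)_\infty}{(q^2;q^2)_\infty}$, and the proof closes exactly as in the paper.
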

\begin{proof}
For the particular case  $\alpha=-\frac{1}{2}$,
we have $\tilde h_{n,-\frac{1}{2}}(x;q)=\tilde h_{n}(x;q)  $ and
$$ J_{-\frac{1}{2}}^{(2)}(2x;q^2)=\frac{(q;q^2)_\infty}{(q^2;q^2)_\infty\sqrt{x}}Cos_q(x)\;\;\mbox{and}\;\;
J_{\frac{1}{2}}^{(2)}(2x;q^2)=\frac{(q;q^2)_\infty}{(q^2;q^2)_\infty\sqrt{x}}Sin_q(x).$$
It is esay now to finish the proof of the Corollary.
\end{proof}
Now, we express the second $q$-Bessel functions   in terms of the generalized
discrete $q$-Hermite II-polynomials:
\begin{prop}
\begin{equation}\label{summhn}
\ds\sum_{n=0}^{\infty}(-1)^n\frac{q^{n(2n+1)}(q^{2\alpha+2};q^2)_n}{(q;q)_{2n}}
\tilde h_{2n,\alpha}(q^{\alpha+\frac{1}{2}}x;q)=x^{-\alpha-1}J_{\alpha+1}^{(2)}(2x;q^2).
\end{equation}
\end{prop}
\begin{proof}
Taking limit as $y\rightarrow0$ in (\ref{summhnhn})
and using the two limits
\begin{equation}\label{limithn0}
\lim_{y\rightarrow0}y^{-\alpha}J_{\alpha}^{(2)}(2y;q^2)=\frac{(q^{2\alpha+2};q^2)_\infty}{(q^{2};q^2)_\infty},\;\;\;\lim_{y\rightarrow0}y^{-\alpha}J_{\alpha+1}^{(2)}(2y;q^2)=0,
\;\;\;\alpha\ge -\frac{1}{2},
\end{equation}
identity \ref{summhn} follows.
\end{proof}
Notice that by taking $x=0$ in (\ref{summhn})  and by appealing to the first limit in (\ref{limithn0})
and the fact that
\begin{equation*}
  \tilde h_{2n,\alpha}(0;q)=(-1)^nq^{-2n^2+n}(q;q^2)_{n}\;\;\mbox{and}\;\;
  (q;q)_{2n}=(q^2;q^2)_{n}(q;q^2)_{n},
\end{equation*}
the following special case (\cite[Theorem 10.2.1]{andrews}  with $q\mapsto q^2$, $x=q^2$ and
$a=q^{2\alpha+2}$) of the $q$-binomial theorem can be recovered:
\begin{equation}\label{summhn0}
\ds\sum_{n=0}^{\infty}\frac{q^{2n}(q^{2\alpha+2};q^2)_n}{(q^2;q^2)_{n}}=\frac{(q^{2\alpha+4};q^2)_\infty}{(q^2;q^2)_{\infty}}.
\end{equation}
The generalized discrete $q$-Hermite II polynomials  can be  written  in terms of  basic hypergeometric
functions as:
\begin{equation}\label{hnalphad}
\left \{\begin{array}{lll}
\tilde h_{2n,\alpha}(x;q)
&=& (-1)^nq^{-n(2n-1)}(q;q^2)_{n}\ _{1}\phi _1\left(q^{-2n};q^{2\alpha+2};q^2,-q^{2n+1}x^{2} \right),\\
\tilde h_{2n+1,\alpha}(x;q)
&=&(-1)^nq^{-n(2n+1)} x\frac{(q;q^2)_{n+1}}{1-q^{2\alpha+2}}\ _{1}\phi _1\left(q^{-2n};q^{2\alpha+4};q^2,-q^{2n+3}x^{2}
\right).
\end{array}\right.
\end{equation}
\begin{thm}\label{csqgenfun}
\begin{equation}\label{newformula}
e_{q^2}(-z^2)\ds\sum_{n=0}^\infty\frac{(-1)^nq^{2n(n+\alpha)}z^{2n}}{(q^2;q^2)_n(q^{2\alpha+2};q^{2})_n}=
  \ds\sum_{n=0}^\infty\frac{(-1)^nz^{2n}}{(q^2;q^2)_n(q^{2\alpha+2};q^{2})_n},\;\; |z|<1.
\end{equation}
\end{thm}
\begin{proof}
From (\ref{hnalphad}), we have
\begin{equation*}
\tilde h_{2n,\alpha}(iq^{\alpha+1/2};q)= (-1)^nq^{-n(2n-1)}(q;q^2)_{n}\ _{1}\phi _1\left(q^{-2n};q^{2\alpha+2};q^2,q^{2n+2\alpha+2}\right).
\end{equation*}
By the summation formula of $\ _{1}\phi _1$ series (see \cite[Appendix II, (II.5)]{G}):
\begin{equation*}
\ _{1}\phi _1\left(a;c;q,c/a\right) =\frac{(c/a;q)_\infty}{ (c;q)_\infty},
\end{equation*}
we get
\begin{equation*}
\tilde h_{2n,\alpha}(iq^{\alpha+1/2};q)=  (-1)^nq^{-n(2n-1)}(q;q^2)_{n}\frac{(q^{2n+2\alpha+2};q^{2})_\infty}{(q^{2\alpha+2};q^{2})_\infty},
\end{equation*}
which can be written as
\begin{equation}\label{h2nfacto}
\tilde h_{2n,\alpha}(iq^{\alpha+1/2};q)=  (-1)^nq^{-n(2n-1)}\frac{(q;q^2)_{n}}{(q^{2\alpha+2};q^{2})_n}.
\end{equation}
In particular, setting $x = iq^{\alpha+1/2}$  in the even part of the generating function (\ref{gghntild})
and using (\ref{h2nfacto}), it follows that
\begin{equation*}
  e_{q^2}(-z^2)\ds\sum_{n=0}^\infty\frac{(-1)^nq^{n(2n-1)}q^{n(2\alpha+1)}}{(q^2;q^2)_n(q^{2\alpha+2};q^{2})_n}z^{2n}=
  \ds\sum_{n=0}^\infty\frac{(-1)^n}{(q^2;q^2)_n(q^{2\alpha+2};q^{2})_n}z^{2n}.
\end{equation*}
This identity holds for $z$ in the open unit disk.
\end{proof}
\begin{rem}
Note that  formula (\ref{newformula}) can be deduced from the  Heine transformations  of $\ _{2}\phi_1$ series   (see \cite[Appendix III, (III.3)]{G}):
\begin{equation}\label{heine}
\ _{2}\phi_1(a,b;c;q,z)=\frac{(abz/c;q)_\infty}{(z;q)_\infty}\ _{2}\phi_1(c/a,c/b;c;q,abz/c).
\end{equation}
If one replaces $q$ by $q^2$, $c$ by $q^{2\alpha+2}$, $z$ by $-z^2 q^{2\alpha+2}/ab$, respectively in (\ref{heine}),
and then send $a\to\infty$ and $b\to\infty$, one obtains the transformation (\ref{newformula}).
\end{rem}
Notice that,  setting   $x = iq^{\alpha+1/2}$ in the odd part of the generating function
(\ref{gghntild}), and using the same method us above,
one can derive an equivalent formula to (\ref{newformula}) in which $\alpha$ is
replaced by $\alpha + 1.$\\
In the next corollary,  we  recover  the two  well-known  Ramanujan's identities \cite[Chapeter I, p. 33$-$34, identities (1.7.14) and (1.7.17)]{andrews2}:
\begin{cor}
\begin{equation}\label{form1}
\ds\sum_{n=0}^\infty\frac{q^{n}}{(q;q)_{2n}}=\frac{(-q^3;q^8)_\infty(-q^5;q^8)_\infty(q^8;q^8)_\infty}{(q;q)_\infty},
\end{equation}
\begin{equation}\label{form2}
\ds\sum_{n=0}^\infty\frac{q^{n}}{(q;q)_{2n+1}}=\frac{(-q;q^8)_\infty(-q^7;q^8)_\infty(q^8;q^8)_\infty}{(q;q)_\infty}.
\end{equation}
\end{cor}
\begin{proof}
Setting $\alpha=-\frac{1}{2}$ and $z=iq^{1/2}$ in (\ref{newformula}), and using the identity (39) in
Slater's list \cite{Slater}:
\begin{equation*}
\ds\sum_{n=0}^\infty\frac{q^{2n^2}}{(q;q)_{2n}}=\frac{(-q^3;q^8)_\infty(-q^5;q^8)_\infty(q^8;q^8)_\infty}{(q^2;q^2)_\infty},
\end{equation*}
and the fact that $(q;q)_\infty=(q;q^2)_\infty (q^2;q^2)_\infty$, it follows formula (\ref{form1}).\\
Multiplying both sides of (\ref{newformula}) by  $\frac{1}{1-q}$ and  setting  $\alpha=\frac{1}{2}$ and
$z=iq^{1/2}$ in the resulting equation, and then using the identity (38) in Slater's list \cite{Slater}:
\begin{equation*}
\ds\sum_{n=0}^\infty\frac{q^{2n(n+1)}}{(q;q)_{2n+1}}=\frac{(-q;q^8)_\infty(-q^7;q^8)_\infty(q^8;q^8)_\infty}{(q^2;q^2)_\infty},
\end{equation*}
it follows the summation formula (\ref{form2}).
\end{proof}
\begin{rem}
  The equivalence of identities (38) and (39) in Slater's list \cite{Slater}  with the two corresponding
  identities (\ref{form2}) and (\ref{form1}) due to Ramanujan is certainly known,
as this corresponds to a special case of  of the Heine transformation formula (\ref{heine}).
\end{rem}
\begin{prop} For $|qz|<1$, we have
\begin{equation}\label{newform2}
e_{q^2}(-q^2z^2)\ds\sum_{n=0}^\infty\frac{(-1)^n q^{2n(n + \alpha)}}{(q^2;q^2)_n(q^{2\alpha+2};q^{2})_n}z^{2n}=
\ds\sum_{n=0}^\infty\frac{(-1)^nq^{2n}\left(1+q^{2\alpha}-q^{2n+2\alpha} \right)  }{(q^2;q^2)_n(q^{2\alpha+2};q^{2})_n}z^{2n}.
\end{equation}
\end{prop}
\begin{proof}
Multiply both sides of (\ref{newformula}) by $1+z^2,$ it follows that
\begin{eqnarray*}
e_{q^2}(-q^2z^2)\ds\sum_{n=0}^\infty\frac{(-1)^n q^{2n(n+ \alpha)}}{(q^2;q^2)_n(q^{2\alpha+2};q^{2})_n}z^{2n} = (1+z^2) \ds\sum_{n=0}^\infty\frac{(-1)^n}{(q^2;q^2)_n(q^{2\alpha+2};q^{2})_n}z^{2n}.\\
  =  1+ \ds\sum_{n=1}^\infty\frac{(-1)^n}{(q^2;q^2)_n(q^{2\alpha+2};q^{2})_n}z^{2n} - \ds\sum_{n=1}^\infty\frac{(-1)^n}{(q^2;q^2)_{n-1}(q^{2\alpha+2};q^{2})_{n-1}}z^{2n} \\
  =  1+ \ds\sum_{n=1}^\infty\frac{(-1)^n\left[1-(1-q^{2n})(1-q^{2n+2\alpha}) \right] }{(q^2;q^2)_n(q^{2\alpha+2};q^{2})_n}z^{2n} \\
 =  \ds\sum_{n=0}^\infty\frac{(-1)^nq^{2n}\left(1+q^{2\alpha}-q^{2n+2\alpha} \right)  }{(q^2;q^2)_n(q^{2\alpha+2};q^{2})_n}z^{2n},
\end{eqnarray*}
which completes the proof.
\end{proof}
\begin{cor}
\begin{equation}\label{formsum}
 \frac{1}{(q^2;q^2)_\infty(q^{2\alpha+2};q^{2})_\infty}=\ds\sum_{n=0}^\infty\frac{q^{2n}\left(1+q^{2\alpha}-q^{2n+2\alpha}
\right)  }{(q^2;q^2)_n(q^{2\alpha+2};q^{2})_n}.
\end{equation}
\end{cor}
\begin{proof}
Setting $z=i$ in (\ref{newform2}), we obtain
\begin{equation*}
e_{q^2}(q^2)\ds\sum_{n=0}^\infty\frac{q^{2n(n-1)}q^{n(2\alpha+2)}}{(q^2;q^2)_n(q^{2\alpha+2};q^{2})_n}=
\ds\sum_{n=0}^\infty\frac{q^{2n}\left(1+q^{2\alpha}-q^{2n+2\alpha} \right)  }{(q^2;q^2)_n(q^{2\alpha+2};q^{2})_n}.
\end{equation*}
Applying the following Cauchy's formula (see \cite{andrews}, p. 522)
\begin{equation*}
 \ds\sum_{n=0}^\infty\frac{q^{n(n-1)}x^n }{(q;q)_n(x;q)_n}=\frac{1}{(x;q)_\infty},
\end{equation*}
it follows the identity (\ref{formsum}).
\end{proof}
\section{Realization of the  quantum algebra $\mathsf{su}_{q^{\frac{1}{2}}}(1, 1)$ }
The quantum algebra $ \mathsf{su}_{q}(1, 1)$ is defined as the associative unital  algebra generated
by the operators $\{ K_-,\;K_+,\; K_0\}$ which satisfy
the conjugation relations (see \cite{Kulish})
$$(K_0)^*=K_0,\;\;(K_+)^*=K_-, $$
and the  commutation relations
$$[K_0,K_{\pm}]=\pm K_{\pm},\;\;\;\;[K_{-}, K_{+}]= \left[ 2K_0 \right]_{q^2},$$
where $[x]_q={{q^{x}-q^{-x}}\over{q^{}-q^{-1}}}$ is  a symmetric definition of $q$-numbers, invariant
by $q \leftrightarrow q^{-1}$.\\
The Casimir operator $C$, which by definition commutes with the generators $K_{\pm} $ and $K_0$     is
 $$C= \left[ K_0-\frac{1}{2}\right]_{q^2}^2-K_{+}K_{-}.    $$
Now,  we discuss an explicit  one-dimensional realization of the  quantum
algebra $\mathsf{su}_{q^{\frac{1}{2}}}(1, 1)$.
We give  a  concrete functional realization of the Hilbert space $\mathfrak{H}$ (defined just
below) and   an explicit expression of the representation
operators $K_-$, $K_+$ and $K_0$ defined in preceding paragraph in terms of
 $q$-difference  operators.\\
For this purpose, first we take   $\mathfrak{H}=L_{\alpha}^2(\R) $ to be the  space of
functions $\psi(x)$ such that
\begin{equation*}
\int_{-\infty}^{\infty}|\psi(x)|^2|x|^{2 \alpha+1}dx<\infty
\end{equation*}
with the scalar product
\begin{equation*}
\langle\psi_1,\psi_2\rangle=\int_{-\infty}^{\infty}\psi_1(x)\overline{\psi_2(x)}|x|^{2\alpha+1}dx.
\end{equation*}
Now, we construct a convenient orthonormal basis  of  $ L_{\alpha}^2(\R)$ consisting of
$(q,\alpha)$-deformed  Hermite  functions defined by
\begin{equation}\label{wave}
\phi_{n}^\alpha(x;q)=d_{n,\alpha}\sqrt{\omega_\alpha(x;q)}\tilde h_{n,\alpha}(x;q),
\end{equation}
where $\tilde h_{n,\alpha}(x;q)$, $\omega_\alpha(x;q)$ and $d_{n,\alpha}$ are given by (\ref{hnalphatild}),
(\ref{omega}) and (\ref{dncoef}),  respectively.

\begin{prop}
$\{  \phi_{n}^\alpha(x;q)   \}_{n=0}^\infty $  is a complete
orthonormal set in  $ L_{\alpha}^2(\R)$.
\end{prop}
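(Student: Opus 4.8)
The plan is to prove the two assertions separately: orthonormality, which is immediate from the continuous orthogonality relation already established, and completeness, which is the substantive part. For orthonormality, note first that each $\tilde h_{n,\alpha}(x;q)$ has real coefficients and $\omega_\alpha(x;q)>0$, so every $\phi_n^\alpha(x;q)$ is a real function; the factor $\sqrt{\omega_\alpha}$ supplies the decay that places $\phi_n^\alpha$ in $L^2_\alpha(\R)$. Then
\[
(\phi_n^\alpha,\phi_m^\alpha)=d_{n,\alpha}d_{m,\alpha}\int_{-\infty}^{\infty}\tilde h_{n,\alpha}(x;q)\,\tilde h_{m,\alpha}(x;q)\,\omega_\alpha(x;q)|x|^{2\alpha+1}dx=d_{n,\alpha}d_{m,\alpha}\,d_{n,\alpha}^{-2}\delta_{n,m}=\delta_{n,m},
\]
where the middle equality is exactly (\ref{cont-orth}) and the constants $d_{n,\alpha}$ of (\ref{dncoef}) were designed so that the diagonal value is $1$.

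For completeness it suffices to show that the only $f\in L^2_\alpha(\R)$ orthogonal to every $\phi_n^\alpha$ is $f=0$. Since $\{\tilde h_{n,\alpha}(\cdot;q)\}_{n\ge0}$ contains one polynomial of each degree, it is a linear basis of the polynomials, so $(f,\phi_n^\alpha)=0$ for all $n$ is equivalent to the vanishing of all moments $\int_{-\infty}^{\infty}f(x)\sqrt{\omega_\alpha(x;q)}\,x^{n}|x|^{2\alpha+1}dx=0$. I would then exploit parity: the weight $|x|^{2\alpha+1}\sqrt{\omega_\alpha}$ is even, $\phi_{2n}^\alpha$ is even and $\phi_{2n+1}^\alpha$ is odd, so the even and odd parts $f_e,f_o$ of $f$ decouple. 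Feeding in the quadratic expansions (\ref{exp-hn-ln}) of $\tilde h_{2n,\alpha}$ and $\tilde h_{2n+1,\alpha}$ in $q$-Laguerre polynomials and performing the change of variable $t=q^{-2\alpha-1}x^2$ (the same substitution used to prove (\ref{cont-orth})), these two families of conditions become the assertions that suitable transforms of $f_e$ and $f_o$ are orthogonal, in $L^2\big((0,\infty),t^{\alpha}e_{q^2}(-t)\,dt\big)$ and $L^2\big((0,\infty),t^{\alpha+1}e_{q^2}(-t)\,dt\big)$ respectively, to every $q$-Laguerre polynomial $L_n^{(\alpha)}$, resp. $L_n^{(\alpha+1)}$. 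Thus completeness of $\{\phi_n^\alpha\}$ reduces to the density of the $q$-Laguerre systems in these two weighted spaces.

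The main obstacle is precisely this density statement, and it is where genuine care is needed rather than routine computation. A naive moment/Fourier argument does not close: with $g(x)=f(x)\sqrt{\omega_\alpha(x;q)}|x|^{2\alpha+1}$, Cauchy--Schwarz gives $g\in L^1(\R)$, but $\sqrt{\omega_\alpha(x;q)}=\big(e_{q^2}(-q^{-2\alpha-1}x^2)\big)^{1/2}$ decays only like $e^{-c(\log|x|)^2}$, i.e. faster than any power yet slower than any exponential, so $\widehat g$ need not extend analytically to a strip and the vanishing of all its derivatives at the origin does not by itself force $g\equiv0$. This is the familiar indeterminacy of the $q$-Laguerre moment problem. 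I would therefore obtain the density not through analyticity but by appealing to (and, where needed, verifying the $N$-extremality behind) the known completeness of the $q$-Laguerre functions for this particular orthogonality measure in the literature (\cite{Koekoek,Moak}), and then transport it back through $t=q^{-2\alpha-1}x^2$ to conclude $f_e=f_o=0$, hence $f=0$. An alternative I would keep in reserve is the Poisson-type kernel of Theorem (\ref{summhnhn}): if the full two-variable kernel $\sum_{n}\phi_n^\alpha(x;q)\phi_n^\alpha(y;q)$ can be summed to the reproducing kernel of $L^2_\alpha(\R)$, completeness follows directly and the moment problem is bypassed altogether.
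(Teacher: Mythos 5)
Your orthonormality argument coincides with the paper's and is fine. Your diagnosis of the completeness difficulty is also exactly right, and it is worth noting that it applies verbatim to the paper's own proof: the paper uses the inversion formula (\ref{monometild}) to reduce orthogonality against all $\phi_n^\alpha$ to the vanishing of all moments $\int_{-\infty}^{\infty}\sqrt{\omega_\alpha(x;q)}\,x^n f(x)|x|^{2\alpha+1}dx$, and then concludes $f=0$ by ``the technique of \cite{Akhiezer} (p.~26)'' --- which is precisely the Fourier--analyticity argument you rejected: that technique needs the weight to decay at least exponentially so that the transform extends analytically, whereas $\sqrt{\omega_\alpha(x;q)}$ decays only like $e^{-c(\log|x|)^2}$. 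So the paper's completeness proof has the very gap you describe.

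However, your proposed repair contains a genuine gap that cannot be closed: the density of the $q$-Laguerre polynomials in $L^2\bigl((0,\infty),t^{\alpha}e_{q^2}(-t)\,dt\bigr)$ which you want to quote from \cite{Koekoek,Moak} is not in those references, and it is in fact false. The paper itself exhibits two distinct orthogonality measures for the same polynomials: the discrete one (\ref{orthohnalpht}) and the continuous one (\ref{cont-orth}). Normalizing both to probability measures and using $\tilde h_{0,\alpha}=1$ together with the inversion formula, the two measures have identical moments, so the underlying Hamburger moment problem is indeterminate (this indeterminacy for the $q$-Laguerre and discrete $q$-Hermite II families is also recorded in \cite{Koekoek}). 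By M.~Riesz's classical theorem on indeterminate moment problems, the polynomials are dense in $L^2(\sigma)$ for a solution $\sigma$ if and only if $\sigma$ is N-extremal, and N-extremal solutions are purely atomic. The measure $\omega_\alpha(x;q)|x|^{2\alpha+1}dx$ is absolutely continuous, hence not N-extremal, hence the polynomials are not dense in $L^2$ of this measure; transporting this back through your (correct) reduction, $\{\phi_n^\alpha\}$ is not complete in $L^2_\alpha(\R)$. The same fact defeats your reserve plan: $\sum_n\phi_n^\alpha(x;q)\phi_n^\alpha(y;q)$ is the kernel of the orthogonal projection onto the closed span, which is a proper subspace, so it cannot sum to a reproducing kernel for the whole space. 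In short, your reduction to the $q$-Laguerre system and your critique of the moment argument are sound, but the density you defer to the literature does not hold; the completeness claim as stated over Lebesgue measure cannot be proved by your route or the paper's, and could only be expected for a discrete (N-extremal) choice of orthogonality measure.
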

\noindent \textbf{Proof:}\\
The  continuous orthogonality relation (\ref{orthohnalpht}) for $\tilde h_{n,\alpha}(x;q)$
can be written as
\begin{equation*}
\ds\int_{-\infty}^{\infty}\phi_{n}^\alpha(x;q)\phi_{m}^\alpha(x;q) |x|^{2\alpha+1}dx= \delta_{n,m}.
\end{equation*}
Thus $\{ \phi_{n}^\alpha(x;q) \}_{n=0}^\infty $ is an orthonormal set in  $L_{\alpha}^2(\R)$.
We prove that it is a complete set.
Suppose that there exists $f\in L_{\alpha}^2(\R)$ orthogonal to all $ \phi_{n}^\alpha(x;q)$,
$$ \int_{-\infty}^{\infty} \phi_{n}^\alpha(x;q) f(x)|x|^{2\alpha +1}dx=0,\;\;\;\mbox{for all}\;\; n\in \N .$$
By using the inverse formula (\ref{monometild}), we obtain
$$ \int_{-\infty}^{\infty}\sqrt{\omega_\alpha(x;q)} x^n  f(x)|x|^{2\alpha +1}dx=0,\;\;\;\mbox{for all}\;\;
n\in \N .$$
Using the technique that appears in \cite{Akhiezer} (p. 26),
we deduce that $f=0$. \ef
Let $\delta_{q}$ be the $q$-dilatation operator in the variable $x$, i.e.
$\delta_{q}f(x)=f(qx).$  The operator of multiplication by a
function $g$ will be denoted also by $g$.\\
Let $\mathfrak{S}_{q\alpha}$ be the  finite linear  span of $(q,\alpha)$-deformed  Hermite
functions $\phi_{n}^\alpha(x;q)$.\\
It is well known that a solution of the stationary Schr\"{o}dinger equation is represented by
eigenfunctions of the Schr\"{o}dinger operator.
\begin{defn} The $q$-Schr\"{o}dinger operator $H$ acting on any function $f$ in $ L_{\alpha}^2(\R)$  is
defined by
\begin{equation}\label{oper-H}
Hf=
\begin{pmatrix}
H_e & 0 \\
0 &  H_o
\end{pmatrix}
\begin{pmatrix}
f_e \\
f_o
\end{pmatrix}
\end{equation}
where
\begin{equation*}
\begin{array}{lll}
H_e & = & -\frac{q^{2\alpha+1}}{(1-q)x^2}  \\
 && \times  \left[q^{-2\alpha}\delta_{q^{-1}}\sqrt{1+q^{-2\alpha-1}x^2}+\sqrt{1+q^{-2\alpha-1}x^2}\delta_{q}
-(1+q^{-2\alpha}+q^{-2\alpha-1}x^2)I \right],
\end{array}
\end{equation*}

\begin{equation*}
\begin{array}{lll}
H_o &  =  & -\frac{q^{2\alpha+1}}{(1-q)x^2} \\ && \times  \left[q\delta_{q^{-1}}\sqrt{1+q^{-2\alpha-1}x^2}+
q^{2\alpha+1} \sqrt{1+q^{-2\alpha-1}x^2}\delta_{q}-(1+q^{2\alpha+2}+q^{-2\alpha-1}x^2)I \right],
\end{array}
\end{equation*}
 $f_e$ and $f_o$ are respectively the even and the odd parts of $f$ and $I$ is the identity operator.
\end{defn}
\begin{thm}
$H $ is a self-adjoint operator in $\mathfrak{S}_{q\alpha}$,  with eigenfunctions
\begin{equation*}
 \phi_{n}^\alpha(x;q),\;\;n=0,1,2,...,
\end{equation*}
and we have
 \begin{equation*}
H\phi_{n}^\alpha(x;q) = \left\llbracket n \right\rrbracket_{q,\alpha}\phi_{n}^\alpha(x;q).
 \end{equation*}
\end{thm}
\begin{proof}
Let $f,g \in \mathfrak{S}_{q\alpha} $, $f=f_e+f_o, \;\;\; g=g_e+g_o .$
Due to the parity of the integrand in $(Hf,g)$, we can write
$$(Hf,g)=(H_ef_e,g_e) +(H_of_o,g_o), $$
where
\begin{eqnarray*}
(H_ef_e,g_e)&=&
-\frac{q^{2\alpha+1}}{(1-q)}\int_{-\infty}^{\infty} \frac{q^{-2\alpha}\sqrt{1+q^{-2\alpha-3}x^2}}{x^2}f_e(q^{-1}x)\overline{g_e(x)}|x|^{2\alpha+1}dx\\
&&-\frac{q^{2\alpha+1}}{(1-q)}\int_{-\infty}^{\infty}\frac{\sqrt{1+q^{-2\alpha-1}x^2}}{x^2}f_e(qx)\overline{g_e(x)}|x|^{2\alpha+1}dx\\
&&+\frac{q^{2\alpha+1}}{(1-q)}\int_{-\infty}^{\infty}\frac{ (1+q^{-2\alpha}+q^{-2\alpha-1}x^2)}{x^2} f_e(x)\overline{g_e(x)}|x|^{2\alpha+1}dx,
\end{eqnarray*}
\begin{eqnarray*}
(H_of_o,g_o)&=&
-\frac{q^{2\alpha+1}}{(1-q)}\int_{-\infty}^{\infty} \frac{q\sqrt{1+q^{-2\alpha-3}x^2}}{x^2}f_o(q^{-1}x)\overline{g_o(x)}|x|^{2\alpha+1}dx\\
&&-\frac{q^{2\alpha+1}}{(1-q)}\int_{-\infty}^{\infty}\frac{q^{2\alpha+1}\sqrt{1+q^{-2\alpha-1}x^2}}{x^2}f_o(qx)\overline{g_o(x)}|x|^{2\alpha+1}dx\\
&&+\frac{q^{2\alpha+1}}{(1-q)}\int_{-\infty}^{\infty}\frac{ (1+q^{2\alpha+2}+q^{-2\alpha-1}x^2)}{x^2} f_o(x)\overline{g_o(x)}|x|^{2\alpha+1}dx.
\end{eqnarray*}
Using the substitutions $u=q^{-1}x$ in the first integral and $u=qx$ in the second integral, we obtain
\begin{eqnarray*}
(H_ef_e,g_e)&=&
-\frac{q^{2\alpha+1}}{(1-q)}\int_{-\infty}^{\infty}f_e(u) \overline{\frac{\sqrt{1+q^{-2\alpha-1}u^2}}{u^2}g_e(qu)}|u|^{2\alpha+1}du\\
&&-\frac{q^{2\alpha+1}}{(1-q)}\int_{-\infty}^{\infty}f_e(u)\overline{\frac{q^{-2\alpha}\sqrt{1+q^{-2\alpha-3}u^2}}{u^2}g_e(q^{-1}u)}|u|^{2\alpha+1}du\\
&&+\frac{q^{2\alpha+1}}{(1-q)}\int_{-\infty}^{\infty}f_e(u)\overline{\frac{ (1+q^{-2\alpha}+q^{-2\alpha-1}u^2)}{u^2} g_e(u)}|u|^{2\alpha+1}du\\
&=&(f_e,H_eg_e).
\end{eqnarray*}
The same argument can prove that $(H_of_o,g_o)=(f_o,H_og_o)$. Therefore, we conclude that,
$H$ is a self-adjoint operator in $\mathfrak{S}_{q\alpha}$.\\
We have
\begin{eqnarray*}
H_e(\phi_{2n}^\alpha(x;q))&=& d_{2n,\alpha}H_e\left[\sqrt{\omega_\alpha(x;q)}\tilde h_{2n,\alpha}(q^{-1}x;q) \right]\\
&=& -\frac{q^{2\alpha+1}}{(1-q)x^2}d_{2n,\alpha}\sqrt{\omega_\alpha(x;q)}\left[q^{-2\alpha}\tilde h_{2n,\alpha}(q^{-1}x;q)+ (1+q^{-2\alpha-1}x^2)\tilde h_{2n,\alpha}(qx;q)\right.\\
&&\left.-(1+q^{-2\alpha}+q^{-2\alpha-1}x^2) \tilde h_{2n,\alpha}(x;q) \right].
\end{eqnarray*}
Using the relation (\ref{h2n-q-diff}), it follows that
\begin{eqnarray*}
H_e(\phi_{2n}^\alpha(x;q))&=& -\frac{q^{2\alpha+1}}{(1-q)x^2}d_{2n,\alpha}\sqrt{\omega_\alpha(x;q)}\left[(q^{2n-2\alpha-1}-q^{-2\alpha-1})x^2\tilde h_{2n,\alpha}(x;q) \right]\\
&=&\left\llbracket 2n \right\rrbracket_{q,\alpha}\phi_{2n}^\alpha(x;q),
\end{eqnarray*}
and
\begin{equation*}
\begin{array}{lll}
H_o(\phi_{2n+1}^\alpha(x;q))= d_{2n+1,\alpha}H_o\left[\sqrt{\omega_\alpha(x;q)}\tilde h_{2n+1,\alpha}(q^{-1}x;q)
\right] \\
\hspace{2.7cm} =-\frac{q^{2\alpha+1}}{(1-q)x^2}d_{2n+1,\alpha}\sqrt{\omega_\alpha(x;q)}  \\
\times \left[q\tilde h_{2n+1,\alpha}(q^{-1}x;q)+ q^{2\alpha+1}(1+q^{-2\alpha-1}x^2)\tilde h_{2n,\alpha}(qx;q)
-(1+q^{2\alpha+2}+q^{-2\alpha-1}x^2) \tilde h_{2n,\alpha}(x;q)
\right]  \\ \hspace{2.7cm}
= -\frac{q^{4\alpha+2}}{(1-q)x^2}d_{2n+1,\alpha}\sqrt{\omega_\alpha(x;q)}   \\
\times \left[q^{-2\alpha}\tilde
h_{2n+1,\alpha}(q^{-1}x;q)+ (1+q^{-2\alpha-1}x^2)\tilde h_{2n+1,\alpha}(qx;q)
-(q+q^{-2\alpha-1}+q^{-4\alpha-2}x^2) \tilde h_{2n+1,\alpha}(x;q) \right].
\end{array}
\end{equation*}
Using the relation (\ref{h2n1-q-diff}), it follows that
\begin{eqnarray*}
H_o(\phi_{2n+1}^\alpha(x;q))&=& -\frac{q^{4\alpha+2}}{(1-q)x^2}d_{2n+1,\alpha}\sqrt{\omega_\alpha(x;q)}\left[(q^{2n-2\alpha}-q^{-4\alpha-2})x^2\tilde h_{2n+1,\alpha}(x;q) \right]\\
&=&\left\llbracket 2n+1 \right\rrbracket_{q,\alpha}\phi_{2n+1}^\alpha(x;q),
\end{eqnarray*}
us desired.
\end{proof}
Let us note that, due to the regularity of $ \phi_{n}^\alpha(x;q)$, the singularity of
$H$ at $x=0$ can be omitted when we apply $H$ to the function $f\in\mathfrak{S}_{q\alpha} $.\\
From  the forward and backward shift operators (\ref{forwardshifttild}) and  (\ref{backward}),
we define the  operators $a$ and $a^+$  on  $\mathfrak{S}_{q\alpha}$   by means of
$2\times 2$ matrix forms:
\begin{equation}\label{oper-a}
af=\frac{q^\frac{1}{2}}{\sqrt{1-q}x}
\begin{pmatrix}
\delta_{q^{-1}}\sqrt{1+q^{-2\alpha-1}x^2}-1 & 0 \\
0 &  \delta_{q^{-1}}\sqrt{1+q^{-2\alpha-1}x^2}- q^{2\alpha+1}
\end{pmatrix}%
\begin{pmatrix}
f_e \\
f_o
\end{pmatrix},%
\end{equation}
\begin{equation}\label{oper-a+}
a^{+}f=\frac{q^{2\alpha+\frac{3}{2}}}{\sqrt{1-q}x}
\begin{pmatrix}
\sqrt{1+q^{-2\alpha-1}x^2}\delta_{q}-1 & 0 \\
0 &  \sqrt{1+q^{-2\alpha-1}x^2}\delta_{q}- q^{-2\alpha-1}
\end{pmatrix}%
\begin{pmatrix}
f_e \\
f_o
\end{pmatrix}.%
\end{equation}

The reader may  verify that these operators are indeed mutually adjoint
in the Hilbert space $ L_{\alpha}^2(\R)$.\\
So,  the $q$-Schr\"{o}dinger operator $H$ can be factorized as
$$H=a^{+}a. $$
The  action of the operators  $a$ and $a^+$ on the
basis   $\{\phi_{n}^\alpha(x;q)\}_{n=0}^\infty $ of   $ L_{\alpha}^2(\R)$ leads to the explicit results:
\begin{prop}
The following statements hold:
\begin{eqnarray}
a \phi_{0}^\alpha(x;q) &=&0,\label{aphi0}\\
a \phi_{n}^\alpha(x;q) &=&\sqrt{\left \llbracket n \right\rrbracket_{q,\alpha}}\phi_{n-1}^\alpha(x;q),\;\;\;n\ge1,\label{aphin}\\
a^+\phi_{n}^\alpha(x;q) &=&\sqrt{\left\llbracket n+1\right\rrbracket_{q,\alpha}}\phi_{n+1}^\alpha(x;q),\label{a+phin}\\
\phi_{n}^\alpha(x;q)&=& (n!_{q,\alpha})^{-\frac{1}{2}}a^{+n} \phi_{0}^\alpha(x;q)\label{phinexp},
\end{eqnarray}
where  $\left\llbracket n \right\rrbracket_{q,\alpha}$ is defined by (\ref{nfactalpha}).
\end{prop}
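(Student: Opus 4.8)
The plan is to compute the action of $a$ and $a^+$ directly on the deformed Hermite functions $\phi_n^\alpha(x;q)=d_{n,\alpha}\sqrt{\omega_\alpha(x;q)}\,\tilde h_{n,\alpha}(x;q)$ by feeding the forward and backward shift operators \eqref{forwardshifttild} and \eqref{backward} into the matrix definitions \eqref{oper-a} and \eqref{oper-a+}, then reading off the ladder coefficients. Since the matrices are diagonal with separate even/odd entries, I would treat the even-index ($n=2k$) and odd-index ($n=2k+1$) cases separately, exactly as in the preceding self-adjointness proof; the key algebraic input is that $\delta_{q^{-1}}$ evaluates a function at $q^{-1}x$, so applying $a$ to $\phi_n^\alpha$ produces a combination of $\tilde h_{n,\alpha}(q^{-1}x;q)$ and $\tilde h_{n,\alpha}(x;q)$ weighted by $\sqrt{1+q^{-2\alpha-1}x^2}\,\sqrt{\omega_\alpha}$. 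The factor $\sqrt{1+q^{-2\alpha-1}x^2}\sqrt{\omega_\alpha(x;q)}$ must be shown to collapse against $\delta_{q^{-1}}\sqrt{\omega_\alpha}$ so that the surviving quantity is again a clean multiple of $\sqrt{\omega_\alpha}$ times a single Hermite polynomial.

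First I would record the identity $\omega_\alpha(q^{-1}x;q)=e_{q^2}(-q^{-2\alpha-3}x^2)$ and relate $\sqrt{\omega_\alpha(q^{-1}x;q)}$ to $\sqrt{1+q^{-2\alpha-1}x^2}\sqrt{\omega_\alpha(x;q)}$ using the functional equation of $e_{q^2}$, namely $e_{q^2}(z)=(1-z)^{-1}e_{q^2}(q^2 z)$ applied with $z=-q^{-2\alpha-3}x^2$; this is precisely what turns the radical prefactor in $a$ and $a^+$ into the correct weight for the shifted argument. With that substitution, applying \eqref{oper-a} to $\phi_n^\alpha$ reduces the bracket to exactly the left-hand side of the forward shift relation \eqref{forwardshifttild}, which equals $q^{-n}(1-q^n)x\,\tilde h_{n-1,\alpha}(x;q)$. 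The remaining work is bookkeeping of the scalar prefactors: the $q^{1/2}/(\sqrt{1-q}\,x)$ in front of $a$, the normalization ratio $d_{n,\alpha}/d_{n-1,\alpha}$ from \eqref{dncoef}, and the power $q^{-n}(1-q^n)$ from the shift relation must multiply to $\sqrt{\llbracket n\rrbracket_{q,\alpha}}$. I expect this to follow from $\llbracket n\rrbracket_{q,\alpha}=(1-q)^{-1}\bigl((q;q)_{n,\alpha}/(q;q)_{n-1,\alpha}\bigr)$ together with the $q^{n^2/2}$ and $(q;q)_{n,\alpha}^{1/2}/(q;q)_n$ pieces of $d_{n,\alpha}$, with the even/odd split of the generalized $q$-integer \eqref{nfactalpha} handled by the $\theta_{n+1}$ exponent appearing in \eqref{forwardshifttild}. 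Equation \eqref{aphi0} is then immediate, since the forward shift relation at $n=0$ gives a vanishing right-hand side.

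For \eqref{a+phin} I would run the symmetric computation using the backward shift operator \eqref{backward} inside \eqref{oper-a+}, where now $\delta_q$ evaluates at $qx$ and the prefactor $q^{2\alpha+3/2}$ supplies the matching powers; the same weight-function manipulation in reverse converts $\sqrt{1+q^{-2\alpha-1}x^2}\,\delta_q\sqrt{\omega_\alpha}$ into $\sqrt{\omega_\alpha}$ evaluated appropriately, and the bracket becomes the left-hand side of \eqref{backward}, equal to a multiple of $x\,\tilde h_{n+1,\alpha}(x;q)$. Alternatively, once \eqref{aphin} is established, \eqref{a+phin} follows for free because $a^+$ is the adjoint of $a$ on the orthonormal basis $\{\phi_n^\alpha\}$: the matrix element $(a^+\phi_n^\alpha,\phi_{n+1}^\alpha)=(\phi_n^\alpha,a\phi_{n+1}^\alpha)=\sqrt{\llbracket n+1\rrbracket_{q,\alpha}}$ and all other components vanish by orthonormality. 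Finally \eqref{phinexp} is obtained by iterating \eqref{a+phin} starting from $\phi_0^\alpha$, giving $a^{+n}\phi_0^\alpha=\sqrt{\llbracket 1\rrbracket_{q,\alpha}\llbracket 2\rrbracket_{q,\alpha}\cdots\llbracket n\rrbracket_{q,\alpha}}\,\phi_n^\alpha=\sqrt{n!_{q,\alpha}}\,\phi_n^\alpha$, which rearranges to the stated formula. The main obstacle throughout is the weight-function reconciliation in the first paragraph: correctly matching $\sqrt{\omega_\alpha(q^{\pm1}x;q)}$ against the radical $\sqrt{1+q^{-2\alpha-1}x^2}$ so that the shift relations \eqref{forwardshifttild} and \eqref{backward} apply verbatim, since every subsequent step is prefactor accounting that I expect to close cleanly.
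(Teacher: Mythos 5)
Your proposal is correct and is essentially the paper's own proof: the paper likewise obtains \eqref{aphin} and \eqref{a+phin} by inserting the shift relations \eqref{forwardshifttild} and \eqref{backward} into the operators \eqref{oper-a} and \eqref{oper-a+} and closing the prefactor bookkeeping with the recursion $d_{n,\alpha}=\frac{q^{n-\frac{1}{2}}\sqrt{\left\llbracket n \right\rrbracket_{q,\alpha}}}{\sqrt{1-q}\left\llbracket n\right\rrbracket_q}\,d_{n-1,\alpha}$, then deduces \eqref{aphi0} and gets \eqref{phinexp} by iterating \eqref{a+phin}. Your explicit weight-function reconciliation $\omega_\alpha(qx;q)=(1+q^{-2\alpha-1}x^2)\,\omega_\alpha(x;q)$, coming from $e_{q^2}(z)=(1-z)^{-1}e_{q^2}(q^{2}z)$, is precisely the step the paper leaves implicit, so your write-up fills in the details of the same argument rather than taking a different route.
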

\noindent \textbf{Proof:}\\
Formula (\ref{aphin}) follows from the forward and backward shift operators (\ref{forwardshifttild}) and  (\ref{backward}) and from the fact that
\begin{equation*}
d_{n,\alpha}=\frac{q^{n-\frac{1}{2}}\sqrt{\left\llbracket n \right\rrbracket_{q,\alpha}}}{\sqrt{1-q}\left\llbracket n\right\rrbracket_q}d_{n-1,\alpha}.
 \end{equation*}
Formula (\ref{aphi0}) is an immediate consequence of the definition (\ref{wave}) and
(\ref{aphin}). Finally
(\ref{phinexp}) is a consequence of (\ref{a+phin}).
\ef
From  (\ref{aphin}) and (\ref{a+phin})  one deduces that
\begin{equation}\label{eqaa+}
a^+a\phi_{n}^{\alpha}(x;q) =\left\llbracket n\right\rrbracket_{q,\alpha}\phi_{n}^{\alpha}(x;q),
\end{equation}
\begin{equation}\label{eqa+a}
 aa^+\phi_{n}^{\alpha}(x;q) =\left\llbracket n+1\right\rrbracket_{q,\alpha}\phi_{n}^{\alpha}(x;q).
\end{equation}
The number operator $N$ is defined in this case by the relations
\begin{equation}\label{op-numb}
a^+ a=\left\llbracket N\right \rrbracket_{q,\alpha},\;\;\;aa^+=\left\llbracket N+1\right\rrbracket_{q,\alpha}\;\;\;\mbox{on}\;\;\mathfrak{S}_{q\alpha}.
\end{equation}
The formulas  (\ref{op-numb}) can be inverted to  determine an explicit expression  of the  operator $N$  as follows:
\begin{equation}\label{particule-operator}
N:=\frac{1}{2\log q}\log\left(1-(1-q)aa^+\right)+\frac{1}{2\log q}\log\left(1-(1-q)a^+a\right)-\alpha-1.
\end{equation}
From (\ref{eqaa+}), (\ref{eqa+a}) and  (\ref{particule-operator}), we obtain:
\begin{equation}
 N\phi_{n}^{\alpha}(x;q)=n\phi_{n}^{\alpha}(x;q),
\end{equation}
and
\begin{equation}
[N,a]=-a,\;\;\; [N,a^+]=a^+\;\;\;\mbox{on}\;\;\mathfrak{S}_{q\alpha}.
\end{equation}
Now, we consider the operators
$$ b=q^{-\frac{N+(K+1)(\alpha+\frac{1}{2})}{4}} a,\;\;\;    b^+=a^+ q^{-\frac{N+(K+1)(\alpha+\frac{1}{2})}{4}},\;\; K=(-1)^N.$$
Using  the relation
\begin {equation*}\label{relatqnumb}
[x]_{q^\frac{1}{2}}=q^{-\frac{x-1}{2}}\llbracket x\rrbracket_{q},
\end {equation*}
one easily verifies that the actions of the operators $b$ and  $b^+$ on the basis
$\{\phi_{n}^\nu(x;q)\}_{n=0}^\infty  $ are given by
\begin{equation}\label{bb+actions}
\begin{array}{llll}
 b\phi_{2n}^\alpha(x;q) &=& \sqrt{\left[ 2n\right]_{q^{\frac{1}{2}}}}\phi_{2n-1}^\alpha(x;q)  ,&n\ge1,\\
  b\phi_{2n+1}^\alpha(x;q) &=& \sqrt{\left[  2n+2\alpha+2\right]_{q^{\frac{1}{2}}}}\phi_{2n}^\alpha(x;q),\\
b^+\phi_{2n}^\alpha(x;q)&=& \sqrt{\left[2n+2\alpha+2\right]_{q^{\frac{1}{2}}}} \phi_{2n+1}^\alpha(x;q),\\
 b^+\phi_{2n+1}^\nu(x;q) &=& \sqrt{\left[ 2n+2\right]_{q^{\frac{1}{2}}}}\phi_{2n+2}^\alpha(x;q) .\\
\end{array}
\end{equation}
Now we are ready to construct an explicit realization of the operators $K_-$, $K_+$ and $K_0$
generators of the quantum algebra $su_{q^{\frac{1}{2}}}(1,1)$ in terms of
the oscillatorial operators $a$, $a^+$ and $N$ by setting,
$$ K_-=\gamma \left(b\right)^2,\;\;\;K_+=\gamma \left(b^+\right)^2,\;\;\;K_0=\frac{1}{2}(N+\alpha+1),\;\;\;\gamma=(\left[  2\right]_{q^{\frac{1}{2}}})^{-1}.$$
From (\ref{bb+actions}) we derive  the actions of these operators on the  basis $\{  \phi_{n}^\nu(x;q)   \}_{n=0}^\infty $:
\begin{equation}\label{B0B+B}
\begin{array}{llll}
K_0\phi_{n}^\alpha(x;q)&=& \frac{1}{2}(n+\alpha+1)\phi_{n}^\alpha(x;q),\\
K_+\phi_{2n}^\alpha(x;q)&=& \gamma\sqrt{\left[2n+2\right]_{q^{\frac{1}{2}}} \left[2n+2\alpha+2\right]_{q^{\frac{1}{2}}}}\phi_{2n+2}^\alpha(x;q),\\
K_+\phi_{2n+1}^\alpha(x;q)&=& \gamma\sqrt{\left[ 2n+2\right]_{q^{\frac{1}{2}}}\left[ 2n+2\alpha+4\right]_{q^{\frac{1}{2}}}}\phi_{2n+3}^\alpha(x;q),\\
K_-\phi_{2n}^\alpha(x;q) &=& \gamma\sqrt{\left[ 2n\right]_{q^{\frac{1}{2}}}\left[ 2n+2\alpha\right]_{q^{\frac{1}{2}}}}\phi_{n-2}^\alpha(x;q),\;\;n\ge1,\\
K_- \phi_{2n+1}^\alpha(x;q)&=& \gamma\sqrt{\left[2n\right]_{q^{\frac{1}{2}}}\left[  2n+2\alpha+2\right]_{q^{\frac{1}{2}}}}\phi_{2n-1}^\alpha(x;q),\;\;n\ge1.\\
\end{array}
\end{equation}
It follows that
\begin{equation}\label{BB+act}
\begin{array}{llll}
K_-K_+\phi_{2n}^\alpha(x;q)&=& \gamma^2\left[ 2n+2\right]_{q^{\frac{1}{2}}}  \left[ 2n+2\alpha+2\right]_{q^{\frac{1}{2}}}\phi_{2n}^\alpha(x;q),\\
K_-K_+\phi_{2n+1}^\alpha(x;q)&=& \gamma^2\left[2n+2\right]_{q^{\frac{1}{2}}}\left[ 2n+2\alpha+4\right]_{q^{\frac{1}{2}}}\phi_{2n+1}^\alpha(x;q),\\
K_+K_-\phi_{2n}^\alpha(x;q) &=& \gamma^2\left[2n\right]_{q^{\frac{1}{2}}}\left[ 2n+2\alpha\right]_{q^{\frac{1}{2}}}\phi_{2n}^\alpha(x;q),\\
K_+K_- \phi_{2n+1}^\alpha(x;q)&=& \gamma^2\left[2n\right]_{q^{\frac{1}{2}}}\left[  2n+2\alpha+2\right]_{q^{\frac{1}{2}}}\phi_{2n+1}^\alpha(x;q).\\
\end{array}
\end{equation}
Using the following identity (see \cite{Biedenharn} p.58)
\begin{equation}\label{q-numb-add}
 \left[ a \right]_{q}\left[ b-c \right]_{q}
+\left[ b \right]_{q}\left[ c-a \right]_{q}
+\left[ c \right]_{q}\left[ a-b \right]_{q}=0,
\end{equation}
with $a=2n+2$, $b=-2n-2\alpha$, $c=2$ and $a=2n+2$, $b=-2n-2\alpha-2$, $c=2$ respectively,  we obtain
$$\left[2n+2\right]_{q^{\frac{1}{2}}}  \left[ 2n+2\alpha+2\right]_{q^{\frac{1}{2}}}-\left[ 2n\right]_{q^{\frac{1}{2}}}\left[  2n+2\alpha\right]_{q^{\frac{1}{2}}}=
\left[ 2\right]_{q^{\frac{1}{2}}}
\left[  4n+2\alpha+2\right]_{q^{\frac{1}{2}}},
 $$
$$
\left[ 2n+2\right]_{q^{\frac{1}{2}}}\left[ 2n+2\alpha+4\right]_{q^{\frac{1}{2}}}-
\left[ 2n\right]_{q^{\frac{1}{2}}}\left[  2n+2\alpha+2\right]_{q^{\frac{1}{2}}}=\left[  2\right]_{q^{\frac{1}{2}}}
\left[  4n+2\alpha+4\right]_{q^{\frac{1}{2}}}.
$$
By the identity $\left[  2x\right]_{q^{\frac{1}{2}}}= \left[  2\right]_{q^{\frac{1}{2}}}\left[  x\right]_{q}$, we obtain
$$ \left[  4n+2\alpha+2\right]_{q^{\frac{1}{2}}}=\left[  2\right]_{q^{\frac{1}{2}}}\left[  2n+\alpha+1\right]_{q},$$
$$ \left[  4n+2\alpha+4\right]_{q^{\frac{1}{2}}}=\left[  2\right]_{q^{\frac{1}{2}}}
\left[  2n+\alpha+2\right]_{q},
$$
which leads to the commutation relations:
$$[K_0,\pm K]=\pm K_{\pm},\;\;\;[K_-, K_+]=\left[ 2K_0 \right]_{q}\;\;\;\mbox{on}\;\;\mathfrak{S}_{q\alpha}$$
and the conjugation relations
$$K_0^*=K_0,\;\;\; K_+^*=K_- \;\;\;\mbox{on}\;\;\mathfrak{S}_{q\alpha}.$$
To analyse irreducible representations  of the $ \mathsf{su}_{q^{\frac{1}{2}}}(1, 1)$  algebra,   we need
the invariant Casimir operator $C$,   which in this case has the  form:
$$ C= \left[ K_0-\frac{1}{2}\right]_{q}^2-K_+K_-.$$
From (\ref{B0B+B}) and (\ref{BB+act}) we obtain
the action  of this  operator on the  basis $\{  \phi_{n}^\alpha(x;q)   \}_{n=0}^\infty $:
$$C\phi_{2n}^\alpha(x;q)=\left(\left[ n +\frac{\alpha}{2}\right]_{q}^2-\left[ n\right]_{q}\left[ n+\alpha\right]_{q}\right) \phi_{2n}^\alpha(x;q),$$
$$C\phi_{2n+1}^\alpha(x;q)=\left(\left[ n+\frac{\alpha+1}{2}\right]_{q}^2-\left[ n\right]_{q}\left[ n+\alpha+1\right]_{q}\right) \phi_{2n+1}^\alpha(x;q).$$
Using (\ref{q-numb-add}) with $a= n+\frac{\alpha}{2}$, $b=n$, $c=-\frac{\alpha}{2}$  and
$a= n+\frac{\alpha+1}{2}$, $b=n$, $c=-\frac{\alpha+1}{2}$ respectively, we derive
$$\left[ n+\frac{\alpha}{2}\right]_{q}^2-\left[ n\right]_{q}\left[ n+\alpha\right]_{q}=\left[\frac{\alpha }{2}\right]_{q}^2 ,$$
$$\left[ n+\frac{\alpha+1}{2}\right]_{q}^2-\left[ n\right]_{q}\left[ n+\alpha+1\right]_{q}=\left[\frac{\alpha  +1}{2}\right]_{q}^2 .$$
The Casimir operator $C$ has two  eigenvalues
$ \left[\frac{2\alpha+1  \mp 1}{4}\right]_{q}^2 $
in the subspaces $\mathfrak{S}_{q\alpha}^\pm$
formed by the  even and odd  basis vectors $\{  \phi_{n}^\alpha(x;q)   \}_{n=0}^\infty $, respectively.
Thus  $ \mathfrak{S}_{q\alpha}$ splits into the direct sum of two
$\mathsf{su}_{q^{\frac{1}{2}}}(1, 1)$-irreducible subspaces $ \mathfrak{S}_{q\alpha}^+$ and $\mathfrak{S}_{q\alpha}^-$.
\\
\begin{rem}
We deduce from (\ref{bb+actions}) that  the operators $b$ , $b^+$  and $N$ satisfy the relations
\begin{equation}\label{commut}
bb^+ - q^{\pm \frac{1+2\nu K}{2}}b^+b= \left[ 1+2\nu K \right]_{q^{\frac{1}{2}}}q^{\mp \frac{N+\nu -\nu K}{2}}
\;\;\;\mbox{on}\;\;\mathfrak{S}_{q\alpha},
\end{equation}
where $\nu=\alpha+\frac{1}{2}$.
This leads to  explicit expressions for the generators $\{b\;,b^+\;,N\}$ of the $q$-deformed Calogero$-$Vasiliev Oscillator algebra (see \cite{Macfarlane,Macfarlane1}) . In particular, Macfarlane in \cite{Macfarlane} has shown that
if $\nu= \frac{p-1}{2}$, this oscillator realises the $q$-deformed para-Bose oscillator of order $p$.
\end{rem}

{\bf ACKNOWLEDGMENTS.} The authors would like to express their sincere thanks
to the anonymous referee for a very careful reading of the manuscript and useful suggestions,
leading to Theorem 3.4, which has grately improved the quality of the paper.
This project was supported by King Faisal University  under Research Grant $\#$RG186116.


\end{document}